\documentclass[letterpaper,10pt,conference]{ieeeconf}

\usepackage{mathtools}
\usepackage{amssymb, amsmath, bm}
\usepackage{xcolor}
\usepackage{balance} 
\usepackage{algorithm}
\usepackage{float}
\usepackage[noend]{algpseudocode}
\usepackage{amsmath, amsfonts}
\usepackage{adjustbox}
\usepackage{threeparttable}

\usepackage{amsthm}

\usepackage{booktabs}

\usepackage{hyperref}
\usepackage{cleveref}
\let\labelindent\relax
\usepackage{enumitem}

\DeclareMathOperator*{\argmax}{arg\,max}

\usepackage[style=ieee,dashed=false]{biblatex}
\IEEEoverridecommandlockouts                             
\theoremstyle{plain}
\newtheorem{theorem}{Theorem}
\newtheorem{lemma}{Lemma}

\newtheorem{proposition}{Proposition}

\theoremstyle{definition}
\newtheorem{definition}{Definition}

\theoremstyle{remark}

\crefname{theorem}{Theorem}{Theorems}
\crefname{algorithm}{Algorithm}{Algorithms}
\crefname{section}{Section}{Sections}
\crefname{lemma}{Lemma}{Lemmas}
\crefname{proposition}{Proposition}{Propositions}
\crefname{definition}{Definition}{Definitions}

\allowdisplaybreaks

\title{ \bf
{Dynamic Auction Mechanisms in a Closed-Credit Economy: \\  Sequential Slot Reallocation in Air Traffic Management}}

\author{Jihun Kim, Pan-Yang Su, Maria G. Mendoza, Victoria Marie Tuck, Chinmay Maheshwari, S. Shankar Sastry
\thanks{This work was supported by the National Aeronautics and Space Administration (NASA). Jihun Kim, Pan-Yang Su, Maria G. Mendoza, and S. Shankar Sastry are with the University of California, Berkeley.  Victoria Marie Tuck is with the University of Pennsylvania. Chinmay Maheshwari is with the Johns Hopkins University. Correspondence to: {\tt\footnotesize jihun.kim@berkeley.edu}}
}

\begin{document}

\maketitle 
\thispagestyle{empty}
\pagestyle{empty}

\begin{abstract}
Indivisible resources allocated in advance may need to be reallocated when capacity reductions make existing assignments infeasible, yet monetary compensation is often impractical or undesirable. We develop a sequential resource reallocation scheme using reusable artificial credits, where agents hold pre-existing resource rights and have heterogeneous marginal valuations over time blocks. Capacity reductions produce fractional endowments, which our closed-credit scheme resolves into indivisible final allocations. We model agents' intertemporal preferences through a predicted-value-to-go function and characterize conditions ensuring that their optimal demand is non-increasing in price. We propose an ascending-auction mechanism that partitions agents into subgroups and uses the minimum subgroup market-clearing price as a common transaction price. We show that the mechanism preserves non-myopic individual rationality and weak nontrader neutrality. Moreover, we establish Pareto efficiency under optimal-demand clearing in the unified market and bound the expected gain from misreporting by the probability that an agent changes the common price. Simulations of airport slot reallocation demonstrate reductions in total modeled delay and cancellation costs relative to a ground-delay baseline.
\end{abstract}

\section{Introduction}

We study dynamic auction markets in which long-lived agents enter with existing resource endowments, resource capacities are uncertain, and monetary transfers are restricted. Such problems arise when scarce resource rights are allocated in advance but must subsequently be adjusted as uncertain capacities are realized. Unlike a sequence of
static allocation problems, previous allocations create endowments:
agents retain rights to previously allocated resources, so subsequent
reallocation requires exchange among agents. This intertemporal exchange
structure arises naturally in transportation capacity, 
advance reservations, and other settings in which usage rights are
allocated before uncertainty is fully resolved.

Our goal is to design a mechanism with desirable economic
properties, including Pareto efficiency, incentive
compatibility, individual rationality, and weak nontrader
neutrality, a counterpart of weak budget balance in our
model. Our mechanism builds on artificial currencies.
Although credits have no intrinsic value, they can acquire
endogenous value by providing access to resources in
future markets. We capture this future benefit through
a predicted-value-to-go function that enters agents'
utilities and influences their willingness to trade.
Existing artificial-currency mechanisms study repeated
access to shared resources \cite{elokda2024self,288635}.
Our setting focuses on reallocating block-specific
resource rights assigned before capacity uncertainty
is resolved. Agents trade relative to their scaled
endowments, balancing current resource values against
the continuation value of credits. This creates a
dynamic exchange problem with capacity uncertainty
and pre-existing resource rights.

\textbf{Theoretical difficulties.}
Even when predicted-value-to-go functions are affine
with strictly positive slopes, which yield quasilinear preferences over current allocations
and net credit transfers, the mechanism-design problem
remains challenging. Unlike bilateral trade and conventional double auctions \cite{myerson1983efficient,MCAFEE1992434,SegalHalevi2018MUDA}, agents are not exogenously designated as buyers or sellers. Each agent enters with a resource endowment and may become a buyer or a seller depending on its valuation, credit balance, and the market price. Moreover, unlike standard multi-unit auctions with budget constraints \cite{DOBZINSKI2012486}, the mechanism must accommodate both sides of the market while respecting agents' credit constraints.

Our first result, \cref{thm:impossibility}, demonstrates a fundamental limitation of this exchange problem. We show that Pareto efficiency, incentive compatibility, invidual rationality, and weak nontrader neutrality cannot in general be achieved simultaneously. The result follows by embedding bilateral trade as a special case and applying the Myerson--Satterthwaite impossibility theorem \cite{myerson1983efficient}.

\textbf{Proposed mechanism.}
Motivated by this impossibility, we propose an ascending-auction mechanism (\cref{alg:main_auction}) that preserves individual rationality and weak nontrader neutrality. We characterize conditions on the
predicted-value-to-go functions that ensure price-monotone
demand (\cref{prop:optimal_demand_price_monotonicity}), which lays the foundation for ascending price discovery.
Building on budget-constrained multi-unit auctions
\cite{DOBZINSKI2012486} and multi-unit double auctions
\cite{SegalHalevi2018MUDA}, the mechanism partitions agents
into $G$ subgroups, conducts price discovery within each subgroup,
and uses the minimum subgroup price as the common transaction
price. Subgrouping restricts trading opportunities but can
also limit an agent's influence on the common price.
We establish a conditional Pareto-efficiency guarantee when $G=1$ (\cref{thm:mechanism_properties2}) and bound manipulation gains in terms
of the probability of changing the common price (\cref{thm:approximate-NIC}). All proofs are deferred to the supplementary material.

\textbf{Airport slot allocation.}
We apply our mechanism to airport slot allocation
\cite{vossen_ball_2006}.
Airlines plan flights well in advance, while available
capacity remains uncertain. Using our mechanism, we develop a day-of market that uses reusable credits to elicit airlines' demands and compensate them for releasing capacity across successive time blocks. We evaluate its performance through simulations.




\section{Related Work}
\textbf{Dynamic allocation without money.}
Resource allocation with artificial currencies has a long
history, including pseudomarkets \cite{doi:10.1086/260757}
and approximate competitive equilibrium from equal incomes
\cite{doi:10.1086/664613}. A growing literature extends
this idea to repeated resource-allocation problems,
including spot mechanisms
\cite{https://doi.org/10.3982/TE5419} and dynamic approximate
competitive equilibrium from equal incomes
\cite{nguyen2023dynamiccombinatorialassignment}.
These approaches typically focus on repeated resource allocation problems, while we deal with exchange problems where agents have endowments. Banerjee et al.~\cite{10.1145/3580507.3597723} study
reusable resources with unit demand within each round
and potentially multi-round usage requirements.
Our agents may demand multiple units within a block,
with different marginal valuations, and trade relative
to pre-existing resource endowments.

Karma mechanisms \cite{elokda2024self,288635} are
particularly related to our setting, where they use artificial currencies, called ``karma,'' to coordinate resource allocation. Their models, however,
differ from ours. Elokda et al.~\cite{elokda2024self} study repeated
competition for a resource with heterogeneous urgency
but without initial resource endowments, while
Vuppalapati et al.~\cite{288635} incorporate fair shares
but focus on demand quantities rather than heterogeneous
marginal valuations. Our model combines heterogeneous
multi-unit valuations with credit-based exchange of
resource rights under capacity uncertainty.

A related literature studies dynamic exchange, including
\cite{10.1111/j.1467-937X.2009.00575.x,2887007.2887135}. These models,
however, are motivated primarily by organ exchange and have a different
allocation structure from ours. In particular, organ-exchange models
typically focus on one-to-one or compatibility-constrained matching, while our application involves allocating capacity to agents with
potentially multi-unit demand.

Methodologically, our approach is closely related to
\cite{doi:10.1287/moor.2020.1098}, which uses artificial currencies to
translate monetary mechanisms into approximately incentive-compatible
nonmonetary mechanisms in repeated allocation environments. Our setting
differs because previously allocated rights become endowments and
reallocation therefore takes the form of exchange. As a result, the
guarantees developed for repeated assignment problems do not carry over directly.

\textbf{Bilateral Exchange and Double Auctions.}
The difficulty of exchange under private information was first formalized by Myerson and Satterthwaite
\cite{myerson1983efficient}, in what is now known as the
Myerson--Satterthwaite theorem. Subsequent work extends this
impossibility to richer exchange environments, including settings with multiple goods or multiple agents
\cite{https://doi.org/10.1111/jpet.70109,10.1609/aaai.v38i9.28857}.
More recently, \cite{11312856} studies a two-stage market in which
previously allocated resources are subsequently exchanged and derives
related impossibility results. Our work is closely related to this line
of research, but considers exchange mediated by an artificial currency
whose value derives from future allocation opportunities.

People have proposed methods to get around the impossibility results by trading off different properties. Our work is related to double auctions \cite{MCAFEE1992434}. Recently, Segal-Halevi et al. proposed a truthful multi-unit double-auction mechanism \cite{SegalHalevi2018MUDA}. They split the agents into two groups, run auctions separately, and exchange the prices. This mechanism ensures truthfulness. Our model is more difficult in two aspects. First, their models assume that the roles are exogeneously given, but our agents can be either sellers or buyers depending on the prices. Second, our agents have budget constraints, so naively swapping the prices may result in budget constraint violation.



\textbf{Airspace market design.}
Previous work considers mediated slot trading
\cite{vossen_ball_2006},
preference-based slot scheduling
\cite{doi:10.1287/trsc.2019.0926},
combinatorial exchanges \cite{PELLEGRINI20121009},
and auction-based slot reallocation \cite{lee2024auction}.
These approaches either rely on real-money transfers, which may be undesirable or impractical in operational airspace markets, or lack an alternative currency to facilitate exchange.
Our work bridges this gap by adopting artificial credits as a medium of exchange, enabling flexible slot reallocation without real-money transactions.

Credit-based approaches have also been explored in airspace management, albeit for different purposes. In
\cite{sheth2008incorporating,doi:10.2514/6.2009-7128},
airlines assign credits to flights to influence rerouting
and delay decisions, but not theoretical analysis is given. In \cite{doi:10.2514/6.2010-8078},
these credits serve as flight-specific delay-cost weights
in a traffic flow optimization model, which is different from our use an artificial currency.

More recently, Im et al. \cite{IM2027106051} study
decentralized negotiation through asset trading,
using regulatory oversight to guide consensus toward
efficient and equitable outcomes. Their focus is
coordination over a common decision, whereas ours
is the sequential reallocation of scarce capacity
among agents with pre-existing rights. Thus, applying their negotiation mechanism to our setting is computationally infeasible.

\section{Model}
\label{sec: overview}
We study sequential reallocation of indivisible resources
when capacity shocks may leave too few items to fulfill
agents' initial allocations (\emph{endowments}).
We scale these endowments proportionally to the remaining
capacity and design a mechanism that allocates integer
quantities while accounting for the scaled entitlements
through credit transfers. Agents receiving fewer items
than their scaled entitlements receive credits, while
those receiving more pay credits. Credit balances carry
forward for use in subsequent blocks.

\subsection{Preliminaries}
Consider a set $I$ of agents and a set $B$ of time blocks. Each block $b\in B$ consists of a collection of discrete items, and we refer to the number of items as the \emph{capacity} of the block. Block $b$ has \emph{nominal capacity} $C_b$ and realized \emph{operational capacity} $\bar C_b\leq C_b$.

Agent $i\in I$ initially holds its initial allocations for each block $b\in B$, represented by $e_i^b\in\mathbb Z_+$ items in block $b$, which we refer to as its \emph{initial endowment}. The total initial endowments do not exceed the nominal capacity:
$\sum_{i\in I}e_i^b \leq C_b.$

Following a capacity reduction, we scale each agent's endowment $e_i^b$ proportionally and define its \emph{scaled endowment} as $\bar e_i^b= e_i^b \cdot  \bar C_b / C_b$, which clearly satisfies $\sum_{i\in I} \bar e_i^b\le \bar C_b$. 
While $e_i^b$ is integer-valued, $\bar e_i^b$ need not be. We interpret $\bar e_i^b$ as an accounting entitlement denominated in items rather than a physical allocation; the final allocation of items remains integer-valued. 

Each agent $i$ enters block $b$ with a \emph{credit balance} $w_i^b\in\mathbb R_+$. The initial balance $w_i^1$ is given, while the balances in subsequent blocks are determined endogenously by the outcomes of earlier blocks. Fig.~\ref{fig:airport_slot_reallocation} illustrates our sequential resource reallocation scheme under such a closed-credit economy. 

\begin{figure}
    \centering
    \includegraphics[width=1\linewidth]{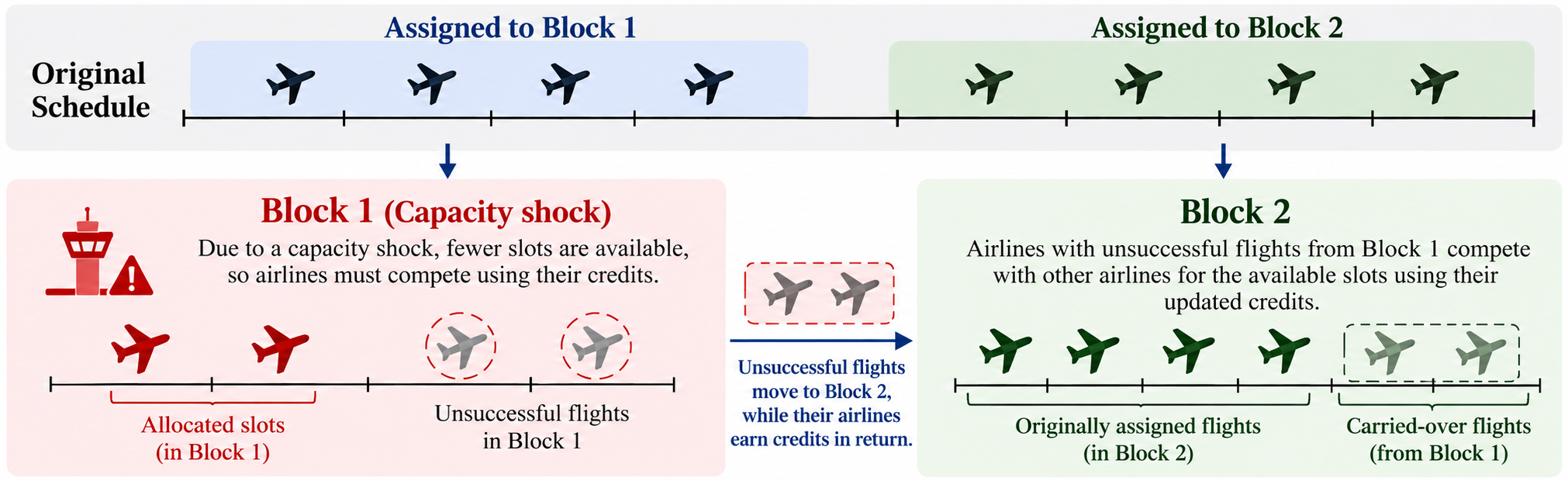}
    \caption{A capacity reduction limits available indivisible items, prompting agents to compete using credits. Unallocated demand rolls over to the next block, while unused credits are carried forward. Here, agents represent airlines and items represent flight slots.}
    \label{fig:airport_slot_reallocation}
\end{figure}

For each block $b$, the market determines a \emph{final allocation} $\{x_i^{b,*}\}_{i\in I}$ and a \emph{clearing price} $p_b^*\geq0$. Let $N_i^b\in\mathbb Z_+$ denote the maximum number of items that agent $i$ can operate in block $b$. Then, the allocation should satisfy
$x_i^{b,*}\in\{0,1,\ldots,N_i^b\},$
and all units in block $b$ are traded at the same unit price $p_b^*$. Relative to its scaled endowment, agent $i$ makes the net credit payment\footnote{The payment \(T_i^b\) in \eqref{eq:net_payment} can take both positive or negative values, denoting payment or rebate, respectively.}
\begin{equation}\label{eq:net_payment}
    T_i^b
    =
    \bigl(x_i^{b,*}-\bar e_i^b\bigr)p_b^*.
\end{equation}
Thus, an agent that acquires more capacity than its scaled endowment pays credits, whereas an agent that releases endowments receives credits. Its credit balance therefore evolves according to
\begin{equation}\label{eq:budget_transition}
    w_i^{b+1}
    =
    w_i^b-T_i^b
    =
    w_i^b-\bigl(x_i^{b,*}-\bar e_i^b\bigr)p_b^*.
\end{equation}
This transition connects decisions across blocks:
spending credits in the current block reduces an agent's purchasing power in future blocks, while releasing endowments increases it.

For each block $b$, the outcome must respect the available
capacity and agents' credit budgets. The budget constraints
require net purchases to be affordable and are automatically
satisfied when $x_i^b\leq\bar e_i^b$.

\begin{definition}[Feasible Allocation-Price Pair]\label{def:feas_alloc}
Given realized capacity $\bar C_b$, credit balances $\{w_i^b\}_{i\in I}$, and scaled endowments $\{\bar e_i^b\}_{i\in I}$, an allocation-price pair $(\{x_i^b\}_{i\in I},p_b)$ is \emph{feasible} if the following constraints hold:
\medskip
\begin{enumerate}
    \item Price Nonnegativity: $p_b \geq 0$.
    \item Domain constraints: $x_i^b \in\{0,1,\dots,N_i^b\}$ for all $i \in I$.
    \item Budget constraints: $(x_i^b-\bar e_i^b)p_b \leq w_i^b$ for all $i \in I$.
    \item Capacity constraints: $\sum_{i\in I}x_i^b \leq \bar C_b$.
\end{enumerate}
\end{definition}

Recall that $\bar e_i^b$ need not be integer-valued. It represents an accounting entitlement rather than a physical allocation: the final allocation $x_i^b$ remains integer-valued, while $\bar e_i^b$ determines the agent's net trading position and hence its net credit payment. Thus, fractional physical items are never allocated, and any non-integer baseline is resolved solely through financial credit adjustments.

\subsection{Predicted Non-myopic Value}\label{sec:value}
Let $v_i^b(k)\in\mathbb R_+$ denote agent $i$'s \emph{marginal value} for its $k$th item in block $b$. Let $J_i^{b+1}(w)$ denote agent $i$'s \emph{predicted-value-to-go}, evaluated at block $b$, from entering block $b+1$ with credit balance $w$, which is nondecreasing in $w$.\footnote{In the spirit of approximate dynamic
programming, $J_i^{b+1}$ approximates the agent's
expected future utility as a function of its credit
balance, with other state variables held fixed.
It need not coincide with the true value-to-go.} Thus, any future utility is absorbed into $J_i^{b+1}$. 
Credits have no intrinsic value. Their value arises endogenously from the future allocation opportunities they provide.

Consider block $b$ at a unit price $p\geq0$. Suppose agent $i$ enters the block with credit balance $w$ and scaled endowment $e$. If it chooses a final allocation of $K$ items, its credit balance in the next block is $w-(K-e)p$. The agent therefore solves the following \textit{predicted non-myopic value} maximization problem:
\begin{equation}\label{eq:dynamic_demand}
\begin{aligned}
    \max_{K\in\{0,1,\dots,N_i^b\}}
    \quad&
    \sum_{k=1}^K v_i^b(k)
    +
    J_i^{b+1}\!\left(w-(K-e)p\right)\\
    \text{s.t.}\quad&
    (K-e)p\leq w.
\end{aligned}
\end{equation}
The constraint requires any net purchase to be affordable. When $K<e$, the agent instead releases capacity and receives $(e-K)p$ credits.

The marginal gain in the objective from increasing the allocation from $k-1$ to $k$ is
\begin{equation}\label{eq:effective_marginal}
\begin{aligned}
    \Delta_i^b(k,p,w,e)
    \equiv\;&
    v_i^b(k)
    -[J_i^{b+1}\!\left(w-(k-1-e)p\right)\\
    &\hspace{25mm}-J_i^{b+1}\!\left(w-(k-e)p\right)].
\end{aligned}
\end{equation}
The bracketed difference in predicted-value-to-go $J_i^{b+1}(\cdot)$ captures the opportunity cost of spending additional credits in the current block. Thus, subject to budget feasibility $(K-e)p \le w$, the $k$th item is weakly preferred to stopping at $k-1$ if and only if
\begin{equation}\label{eq:effective_positive}
    \Delta_i^b(k,p,w,e)\geq0.
\end{equation}
Under the following conditions, \eqref{eq:effective_positive}
characterizes an optimal demand.

\begin{proposition}[Optimal Demand]\label{prop:optimal_demand}
Suppose the marginal valuation $v_i^b(k)$ is nonincreasing in $k$ and the predicted-value-to-go $J_i^{b+1}(w)$ is concave in the credit balance $w$. Then, given the price $p$, budget $w$ and endowment $e$, the marginal gain $\Delta_i^b(k,p,w,e)$ is nonincreasing in $k$, and the largest feasible $k\geq 1$ satisfying \eqref{eq:effective_positive} is an optimal demand (with optimal demand $0$ if no such $k$ exists).
\end{proposition}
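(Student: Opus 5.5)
The plan is to prove the two claims in order. First, $\Delta_i^b(k,p,w,e)$ is nonincreasing in $k$. Second, the objective in \eqref{eq:dynamic_demand} is therefore discretely concave over the feasible set, so the last nonnegative marginal gain identifies a maximizer.

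\textbf{Monotonicity of the marginal gain.} Write $\Delta_i^b(k,p,w,e)=v_i^b(k)-D(k)$, where
\[
D(k)=J_i^{b+1}\bigl(w-(k-1-e)p\bigr)-J_i^{b+1}\bigl(w-(k-e)p\bigr).
\]
Since $v_i^b(k)$ is nonincreasing by assumption, it suffices to show that $D(k)$ is nondecreasing in $k$.
\begin{itemize}
\item If $p=0$, then $D\equiv 0$ and there is nothing to prove.
\item If $p>0$, set $u_k=w-(k-e)p$. Then $D(k)=J(u_k+p)-J(u_k)$ is the increment of $J$ over an interval of fixed length $p$ starting at $u_k$.
\item Since $u_k$ decreases in $k$, the interval moves left as $k$ grows.
\item For a concave function, increments over intervals of equal length are nonincreasing in the left endpoint (slopes of secants decrease). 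This is the standard three-chord inequality, so $D(k)$ is nondecreasing in $k$.
\end{itemize}
One technical point: $J$ is only specified for $w\ge 0$, and $u_k\ge 0$ exactly when $k$ is budget-feasible. I will therefore state monotonicity for $k$ such that $k$ is budget-feasible, where all arguments lie in the domain. Note that $u_{k-1}=u_k+p\ge 0$ as well.

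\textbf{Optimality of the largest feasible $k$.} Let $F(K)$ denote the objective of \eqref{eq:dynamic_demand}, so that $F(K)-F(K-1)=\Delta_i^b(K,p,w,e)$. The feasible set $\{K\le N_i^b:(K-e)p\le w\}$ is an integer interval $\{0,\dots,\bar K\}$, since the constraint is monotone in $K$ and $K=0$ is always feasible because $-ep\le 0\le w$.

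Let $k^*$ be the largest $k\in\{1,\dots,\bar K\}$ with $\Delta\ge 0$, or $k^*=0$ if no such $k$ exists. Two comparisons then finish the argument.
\begin{itemize}
\item For $K<k^*$, we have $F(k^*)-F(K)=\sum_{k=K+1}^{k^*}\Delta(k)$. By monotonicity each term is at least $\Delta(k^*)\ge 0$, so this difference is nonnegative.
\item For $K>k^*$ feasible, we have $F(K)-F(k^*)=\sum_{k=k^*+1}^{K}\Delta(k)$. Each term is negative by maximality of $k^*$.
\end{itemize}
Hence $k^*$ maximizes $F$. The case $k^*=0$ is the second comparison alone.

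The only delicate step is the concavity-to-secant argument together with the domain bookkeeping, i.e., ensuring that every $J$ evaluation lies in the region where $J$ is defined and concave. Everything else is telescoping.
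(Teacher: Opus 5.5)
Your proof is correct and follows the paper's argument: concavity of $J_i^{b+1}$ makes the credit opportunity cost nondecreasing in $k$, which together with nonincreasing $v_i^b(k)$ makes $\Delta_i^b(k,p,w,e)$ nonincreasing in $k$, and the threshold rule then gives an optimal demand. Beyond the paper's one-line proof, you also supply the telescoping comparison and the domain bookkeeping: the feasible set is an integer interval containing $0$, and every argument of $J_i^{b+1}$ is nonnegative.
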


Economically, the conditions of Proposition \ref{prop:optimal_demand} ensure that as the agent acquires more items, the marginal utility of the next item falls while the opportunity cost of the required credits rises.

\subsection{Price Monotonicity of Optimal Demand}
The predicted-value-to-go function $J_i^{b+1}$ determines the endogenous value of credits for future time blocks. In this section, we characterize the class of predicted-value-to-go functions under which optimal demand  is nonincreasing in the price $p$.
This price monotonicity serves as a core foundation for our mechanism design.

\begin{proposition}[Price Monotonicity of Marginal Gains]
\label{prop:nonincreasing_in_p}
Let $J:[0,\infty)\to\mathbb R$.
For $w\geq0$ and $a\in\mathbb R$, define
\[
F_{w,a}(p)
=
J(w-ap)-J(w-(a+1)p)
\]
on the feasible price interval
\[
\mathcal P_{w,a}
=
\{p\geq0:w-(a+1)p\geq0\}.
\]
Then $F_{w,a}$ is nondecreasing on $\mathcal P_{w,a}$
for every $w\geq0$ and $a\in\mathbb R$ if and only if
$J$ is concave on $[0,\infty)$ and $z\longmapsto zJ'_+(z)$
is nondecreasing on $(0,\infty)$, where $J'_+$
denotes the right derivative of a concave function.
\end{proposition}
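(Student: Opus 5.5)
The proof splits into the two implications. Throughout, I would use the reparametrization $u=w-ap$ and $v=w-(a+1)p$, so that $F_{w,a}(p)=J(u)-J(v)$ with $u-v=p$. Feasibility $v\ge 0$ forces $u\ge v\ge 0$. Conversely, given any $0\le v<u$ and any $w\ge 0$, the choice $p=u-v$ and $a=(w-u)/p$ realizes that pair. This freedom in $w$ is what drives necessity.

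\emph{Sufficiency.} Assume $J$ is concave and $z\mapsto zJ'_+(z)$ is nondecreasing. Concave $J$ is locally Lipschitz on $(0,\infty)$. Hence $p\mapsto F_{w,a}(p)$ is locally absolutely continuous on the interior of $\mathcal P_{w,a}$, wherever $v>0$. It therefore suffices to show that its right derivative is nonnegative there. The right derivative is $-aJ'_\pm(u)+(a+1)J'_\pm(v)$, with the one-sided derivatives chosen according to the sign of $-a$ and of $-(a+1)$; I would check that this choice does not affect the inequality below. Multiplying by $p>0$ and using $ap=w-u$ and $(a+1)p=w-v$ gives
\[
p\,F'_{w,a}(p)=w\bigl(J'(v)-J'(u)\bigr)+\bigl(uJ'(u)-vJ'(v)\bigr).
\]
The first term is nonnegative because $w\ge 0$, $v\le u$ and $J'$ is nonincreasing by concavity. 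The second term is nonnegative by the monotonicity of $zJ'_+(z)$.

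Two boundary cases remain. At $p=0$, where $F=0$, continuity from the right handles it when $w>0$. At the endpoint where $v=0$, concavity on $[0,\infty)$ gives $J(0)\le\lim_{z\downarrow0}J(z)$, which can only increase $F$ at that endpoint. Monotonicity therefore persists on the closed interval.

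\emph{Necessity of the monotonicity of $zJ'_+$.} Take $w=0$. Then $u=-ap$ and $v=-(a+1)p$, which is feasible whenever $a\le -1$. Fixing the ratio $u/v=(1+\varepsilon)$ and varying $p$ shows that $\lambda\mapsto J(\lambda z(1+\varepsilon))-J(\lambda z)$ is nondecreasing in $\lambda$. Dividing by $\varepsilon$ and letting $\varepsilon\downarrow 0$ (once concavity is known, so that right derivatives exist) gives that $\lambda zJ'_+(\lambda z)$ is nondecreasing in $\lambda$. This is the claim.

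\emph{Necessity of concavity.} This is the step I expect to be the main obstacle, since $J$ is not assumed continuous a priori. The idea is to take $w$ large, which makes $a$ large. Comparing $F$ at $p$ and $p+h$ then gives
\[
F(p+h)-F(p)=\bigl[J(u-ah)-J(u)\bigr]-\bigl[J(v-(a+1)h)-J(v)\bigr]\ge 0.
\]
Both points shift downward by nearly the same amount, namely $ah$ and $ah+h$. This should yield the decreasing-increments inequality $J(v)-J(v-t)\ge J(u)-J(u-t)$ for $v<u$, which is equivalent to midpoint concavity.

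To avoid needing continuity for the limit $h\to0$, I would instead try to choose $(w,a,p_1,p_2)$ so that the shifts match exactly. One option is the choice $a=p_1/(p_2-p_1)$, which gives a chain $\mathrm{lo}<m<\mathrm{hi}$ with $J(m)-J(m-p_2)\ge J(m+p_1)-J(m)$. I would iterate such exact comparisons to reach midpoint concavity.

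Midpoint concavity still has to be upgraded to concavity. Here I would use the case $a=-1$, where $F(p)=J(w+p)-J(w)$ is nondecreasing in $p$. This makes $J$ nondecreasing, and a monotone midpoint-concave function is concave. If the exact-matching construction fails, the fallback is to first extract monotonicity from $a=-1$, obtain one-sided limits from it, and then pass to the limit in the near-matching inequality.
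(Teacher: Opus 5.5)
Your proposal has one genuine gap, in the necessity of concavity, and one smaller omission in sufficiency.

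\textbf{Main gap: continuity.} You correctly flagged necessity of concavity as the main obstacle, but you do not resolve it. You never show that $J$ is continuous on $(0,\infty)$, and both of your routes need this.

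Your exact-matching choice $a=p_1/(p_2-p_1)$ gives $J(m)-J(m-p_2)\ge J(m+p_1)-J(m)$ only for $p_1<p_2$ strictly. The equal-length case corresponds to $a=\infty$. So reaching $2J(m)\ge J(m-p)+J(m+p)$ still requires letting $p_1\uparrow p_2$, which is exactly the limit you wanted to avoid.

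Your fallback passes to the limit using only monotonicity. That gives $2J(m)\ge J(m-p)+J((m+p)-)$, and the large-$w$ comparison gives $J(v)-J((v-t)-)\ge J(u)-J(u-t)$. Since $J((m+p)-)\le J(m+p)$ for nondecreasing $J$, these are strictly weaker than midpoint concavity at a jump. So the ``monotone and midpoint concave implies concave'' upgrade never receives a valid input.

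You need a separate continuity step before taking any limit, and your own inequality supplies it. Suppose $J(c+)-J(c-)=\delta>0$ at some $c>0$. Take $m\in[3c/4,c)$, $p_1=2(c-m)$, $p_2=3(c-m)$. Then $J(m+p_1)-J(m)\ge J(c+)-J(c-)=\delta$, whereas $J(m)-J(4m-3c)\to0$ as $m\uparrow c$, a contradiction.

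The paper reaches the same conclusion differently. It takes $a=1$ and $w=2c-d$ for a continuity point $d<c$. The limits of $F_{w,1}$ as $p\uparrow c-d$ and $p\downarrow c-d$ are $J(c+)-J(d)$ and $J(c-)-J(d)$, which forces $J(c+)\le J(c-)$.

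Once continuity is established, letting $p_1\uparrow p_2$ gives exact midpoint concavity, hence concavity on $(0,\infty)$. The bound $J(0)\le\lim_{z\downarrow0}J(z)$, which follows from monotonicity, extends concavity to $[0,\infty)$.

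\textbf{Smaller omission in sufficiency.} For $w=0$, $a=-1$ you have $v\equiv0$ on all of $\mathcal P_{0,-1}=[0,\infty)$. So neither your interior derivative argument nor your endpoint remark applies.

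Here $F_{0,-1}(p)=J(p)-J(0)$, so you need $J$ nondecreasing. This is not assumed and must be derived. If $J'_+(x)<0$, then for $y>x$ concavity gives $yJ'_+(y)\le yJ'_+(x)<xJ'_+(x)$, contradicting monotonicity of $zJ'_+$. The bound $J(0)\le\lim_{z\downarrow0}J(z)$ then handles the origin.

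The same fact covers $p=0$ when $w=0$ and $a<-1$. Your ``continuity from the right when $w>0$'' remark leaves this case out.

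\textbf{What is correct.} Your sufficiency identity $pF'_{w,a}(p)=w(J'(v)-J'(u))+(uJ'(u)-vJ'(v))$ is correct and matches the paper's argument; it is a tidier form of the paper's bound $a\ge-y/(y-x)$. Your $w=0$ ratio argument for $zJ'_+$ is also correct and matches the paper.
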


Taking $J=J_i^{b+1}$ and $a=k-1-e$, we have
\[
\Delta_i^b(k,p,w,e)
=
v_i^b(k)-F_{w,k-1-e}(p).
\]
Therefore, under the conditions of
\cref{prop:nonincreasing_in_p},
the marginal gain is nonincreasing in the price
on its feasible domain.

The condition of concavity corresponds to a diminishing marginal value of additional wealth. The additional condition that $w \mapsto w J_+'(w)$ is nondecreasing implies that this marginal value cannot decay too quickly as wealth increases. Standard examples satisfying these conditions include affine functions, power functions $w^\alpha$ for $\alpha \in (0,1]$, and logarithmic functions $\ln(w + b)$ for $b > 0$. We define such a function class below.

\begin{definition}
Let $\mathcal{J}$ denote the class of concave functions
$J:[0,\infty)\to\mathbb{R}$ for which the map
$z\mapsto zJ'_+(z)$ is nondecreasing on $(0,\infty)$.
\end{definition}

Combining Propositions \ref{prop:optimal_demand} and \ref{prop:nonincreasing_in_p} yields the price monotonicity of optimal demand.

\begin{proposition}[Price Monotonicity of Optimal Demand]\label{prop:optimal_demand_price_monotonicity}
Suppose that the marginal valuation $v_i^b(k)$ is nonincreasing in $k$ and $J_i^{b+1}(w) \in \mathcal{J}$. Then, for a given budget $w$ and endowment $e$, the largest optimal demand (defined in Proposition \ref{prop:optimal_demand}) is nonincreasing in the price $p$.
\end{proposition}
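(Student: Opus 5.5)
We would combine the threshold characterization of \cref{prop:optimal_demand} with the price monotonicity of marginal gains from \cref{prop:nonincreasing_in_p}. Fix $w$ and $e$. For a price $p$, let $K^*(p)$ denote the largest optimal demand, which by \cref{prop:optimal_demand} is the largest $k\in\{1,\dots,N_i^b\}$ that is budget-feasible, $(k-e)p\le w$, and satisfies $\Delta_i^b(k,p,w,e)\ge 0$, with $K^*(p)=0$ if no such $k$ exists. Take $p<p'$. It suffices to show that every $k\ge 1$ satisfying both conditions at $p'$ also satisfies them at $p$. Then the feasible set of candidate $k$'s at $p'$ is contained in that at $p$, and so $K^*(p')\le K^*(p)$.

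\emph{Budget feasibility.} If $(k-e)p'\le w$, then either $k\le e$, in which case $(k-e)p\le 0\le w$, or $k>e$, in which case $(k-e)p\le (k-e)p'\le w$. So feasibility at $p'$ implies feasibility at $p$.

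\emph{Marginal gain.} Write $\Delta_i^b(k,p,w,e)=v_i^b(k)-F_{w,k-1-e}(p)$ with $J=J_i^{b+1}\in\mathcal J$. By \cref{prop:nonincreasing_in_p}, $F_{w,a}$ is nondecreasing on $\mathcal P_{w,a}=\{q\ge0: w-(a+1)q\ge0\}$. For $a=k-1-e$ this set is exactly the budget-feasible set $\{q\ge 0:(k-e)q\le w\}$. It is an interval containing $0$, so both $p$ and $p'$ lie in it. Hence $F_{w,a}(p)\le F_{w,a}(p')$, and therefore $\Delta_i^b(k,p,w,e)\ge \Delta_i^b(k,p',w,e)\ge 0$.

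The main subtlety is this domain bookkeeping: we must check that the feasible price interval in \cref{prop:nonincreasing_in_p} coincides with budget feasibility of the $k$th unit. Only then can monotonicity be applied between $p$ and $p'$ without $F$ being evaluated at a negative balance.

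One more point is that \cref{prop:optimal_demand} identifies the largest feasible $k$ with $\Delta\ge0$ as an optimum. We should confirm this is indeed the largest optimal demand, so that the two notions agree. Since $\Delta$ is nonincreasing in $k$, any larger feasible $k'$ with $\Delta(k')<0$ is strictly worse. The feasible set is downward closed in $k$, because the budget constraint only tightens as $k$ increases. With both facts in place, the containment of candidate sets gives the claim.
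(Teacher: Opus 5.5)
Your proof is correct and follows essentially the same route as the paper. Both combine the budget-feasibility case split ($k\le e$ versus $k>e$) with the monotonicity of $F_{w,k-1-e}$ from \cref{prop:nonincreasing_in_p} to show that a candidate valid at the higher price remains valid at the lower one; the paper applies this only to $k=D_i^b(p_2)$, while you apply it to every candidate $k$, which is equivalent, and your explicit check that $\mathcal P_{w,k-1-e}$ is exactly the budget-feasible price set for the $k$th unit makes precise a domain step the paper uses implicitly.
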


Within the class $\mathcal{J}$, consider the special case of affine predicted-value-to-go functions $J_i^{b+1}(w) = \alpha_i^{b+1} w + \beta_i^{b+1}$. Here, the threshold rule \eqref{eq:effective_positive} simplifies to:
\begin{equation}\label{eq:quasilinear_marginal}
  \Delta_i^b(k,p,w,e)=
    v_i^b(k)-\alpha_i^{b+1} p\ge 0
    \quad\Longleftrightarrow\quad
    \frac{v_i^b(k)}{\alpha_i^{b+1}}\ge p. 
\end{equation}

Thus, under an affine predicted-value-to-go function, the agent follows the standard quasilinear demand rule with credit-denominated marginal values $v_i^b(k)/\alpha_i^{b+1}$ independent of $(w,e)$,  subject to budget feasibility. The coefficient $\alpha_i^{b+1}$ of the predicted-value-to-go function acts as the endogenous conversion rate between item value and future credits.

\section{Problem Formulation}
Under the model in \cref{sec: overview}, a mechanism is a mapping from the primitives to a feasible allocation-price pair. Our goal is to design a mechanism such that some desirable (\emph{normative}) properties are satisfied. 

\subsection{Normative Properties}
\label{sec: normative}
\textbf{Non-myopic individual rationality.}
We first define non-myopic individual rationality relative to the administrative allocation that would arise in the absence of trade. Since the scaled endowment $\bar e_i^b$ need not be integer-valued, write
\[
    \bar e_i^b=\lfloor \bar e_i^b\rfloor+r_i^b,
    \qquad r_i^b\in[0,1).
\]
If $r_i^b=0$, agent $i$ retains $\bar e_i^b$ items without trading. If $r_i^b\in(0,1)$, we take the administrative baseline to be the allocation obtained by rounding the scaled entitlement down to $\lfloor \bar e_i^b\rfloor$ items. Accordingly, participation in each time block is evaluated relative to this floor-based administrative allocation, with the fractional residual $r_i^b$ settled at the clearing price $p_b^*$.

\begin{definition}[Non-myopic Individual Rationality]
A mechanism for block $b$ satisfies \textbf{Non-myopic Individual Rationality (NIR)} if, for every resulting allocation $x_i^b$ with a clearing price $p_b^*$ and net credit payment $T_i^b$, every agent $i\in I$ obtains a non-myopic utility inclusive of its predicted-value-to-go $J_i^{b+1}$ weakly greater than its administrative baseline\footnote{The rebate $r_i^b p_b^*$ may be added to the budget in the administrative baseline as is done later. The comparison statement holds in both cases.}:
\begin{equation}\label{eq:block_ir}
    \sum_{k=1}^{x_i^b}v_i^b(k)
    +
    J_i^{b+1}(w_i^b-T_i^b)
    \geq
    \sum_{k=1}^{\lfloor\bar e_i^b\rfloor}v_i^b(k)
+
J_i^{b+1}(w_i^b).
\end{equation}
\end{definition}

Non-myopic IR guarantees that participating in each block market does not make an agent worse off than the corresponding non-trading administrative baseline.

\textbf{Weak nontrader neutrality.}
In a standard open-money economy, Weak Budget Balance (WBB) ensures that the auctioneer does not face a deficit. We retain the analogous accounting condition for credit transfers.

\begin{definition}[Weak Budget Balance]
A mechanism satisfies Weak Budget Balance (WBB) if, for every block $b$, the aggregate net credit payment collected from the agents is nonnegative:
\begin{equation}\label{eq:wbb}
    \sum_{i\in I}T_i^b
    =
    \sum_{i\in I}(x_i^b-\bar e_i^b)p_b^*
    \geq0.
\end{equation}
\end{definition}

In a closed-credit economy, however, the relevant concern is not an external monetary deficit but the effect of credit creation on agents' future purchasing power. If the mechanism injects credits to finance a deficit in the current block, total outstanding credits increase while the balances of nontrading agents remain unchanged. Their relative purchasing power is therefore diluted. Because credits carry over across blocks and derive their value from future allocation opportunities, such dilution can affect future block outcomes.

To formalize this protection of future purchasing power, we introduce the following counterpart of WBB for a
closed-credit economy. A \emph{nontrader} in block $b$
is an agent $i$ with zero net payment, $T_i^b=0$.

\begin{definition}[Weak Nontrader Neutrality]
A mechanism satisfies \textbf{Weak Nontrader Neutrality (WNN)} if, for every block $b$ and every nontrader $i\in I$, the agent's relative credit balance is weakly preserved:
\begin{equation}\label{eq:wnn}
    \frac{w_i^{b+1}}{\sum_{j\in I}w_j^{b+1}}
    \geq
    \frac{w_i^b}{\sum_{j\in I}w_j^b}.
\end{equation}
\end{definition}

WNN ensures that a nontrader's share of total credits
does not decrease following settlement. The following
lemma relates this condition to WBB.

\begin{lemma}[Equivalence between WBB and WNN]\label{lem:equivalence}
Suppose that, in block $b$, there exists at least one nontrader $i\in I$ with $w_i^b>0$. Then WNN in block $b$ holds if and only if WBB in block $b$ holds.
\end{lemma}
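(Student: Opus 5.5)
The argument is a direct algebraic manipulation of the budget transition \eqref{eq:budget_transition}. Write $W^b=\sum_{j\in I}w_j^b$ and $W^{b+1}=\sum_{j\in I}w_j^{b+1}$. Summing \eqref{eq:budget_transition} over all agents gives
\[
W^{b+1}=W^b-\sum_{j\in I}T_j^b .
\]
So the total credit stock weakly shrinks exactly when WBB holds: $W^{b+1}\le W^b$ if and only if $\sum_j T_j^b\ge 0$. I then relate this to a nontrader's share.

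Fix a nontrader $i$ with $w_i^b>0$. Since $T_i^b=0$, we have $w_i^{b+1}=w_i^b$. Condition \eqref{eq:wnn} for this agent therefore reads
\[
\frac{w_i^b}{W^{b+1}}\ge \frac{w_i^b}{W^b}.
\]
Dividing by $w_i^b>0$, this is equivalent to $1/W^{b+1}\ge 1/W^b$. When both totals are positive, that is equivalent to $W^{b+1}\le W^b$, i.e.\ WBB. This settles both directions.

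For ``WNN $\Rightarrow$ WBB'', I use only the existence of one such nontrader $i$. For ``WBB $\Rightarrow$ WNN'', the same computation applies to every nontrader $i'$. If $w_{i'}^b>0$, the equivalence above applies directly. If $w_{i'}^b=0$, then $w_{i'}^{b+1}=0$ and both sides of \eqref{eq:wnn} are $0$, so the inequality holds trivially.

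The main obstacle is well-definedness of the ratios, namely positivity of $W^b$ and $W^{b+1}$. We have $W^b\ge w_i^b>0$ directly. We also have $W^{b+1}\ge w_{i}^{b+1}=w_i^b>0$, provided balances stay nonnegative. This is guaranteed because the budget constraint in \cref{def:feas_alloc} gives $w_j^{b+1}=w_j^b-T_j^b\ge 0$ for every $j$. With both denominators strictly positive, taking reciprocals reverses the inequality cleanly, and the equivalence follows.
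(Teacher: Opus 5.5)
Your proof is correct and follows the paper's argument: you sum the transition to get $\sum_{j\in I}w_j^{b+1}=\sum_{j\in I}w_j^b-\sum_{j\in I}T_j^b$, then cancel the nontrader's positive balance to reduce its WNN inequality to $\sum_{j\in I}T_j^b\ge 0$. You also fill in details the paper's proof leaves implicit: positivity of both total balances (via the budget constraint), and the zero-balance nontraders in the WBB $\Rightarrow$ WNN direction, which is needed since WNN must hold for every nontrader.
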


\begin{proof}
For any nontrader $i$ with $T_i^b=0$, its credit balance remains unchanged: $w_i^{b+1}=w_i^b$. Summing the credit transitions across all agents gives $\sum_{j\in I}w_j^{b+1}
    =
    \sum_{j\in I}w_j^b-\sum_{j\in I}T_j^b$.
Since $w_i^b>0$, WNN for this nontrader is equivalent to
\begin{align}
   & \frac{w_i^{b+1}}{\sum_{j\in I}w_j^{b+1}}
    \geq
    \frac{w_i^b}{\sum_{j\in I}w_j^b}\label{eq:WBB_WNN}
   \\ &\iff
    \frac{w_i^b}
    {\sum_{j\in I}w_j^b-\sum_{j\in I}T_j^b}
    \geq
    \frac{w_i^b}{\sum_{j\in I}w_j^b}
    \iff
    \sum_{j\in I}T_j^b\geq0,\nonumber
\end{align}
which is precisely WBB.
\end{proof}



\textbf{Non-myopic Pareto efficiency.} 
We adapt \cite[Definition~2.3]{DOBZINSKI2012486} to
account for agents' predicted value-to-go.
The resulting property rules out feasible changes
that make any agent or the auctioneer strictly
better off without making another participant
worse off.

\begin{definition}[Non-myopic Pareto Efficiency] 
\label{def: NPE}
An allocation $\{x_i^b\}_{i \in I}$ and net payment $\{T_i^b\}_{i \in I}$  in block $b$ is  \textbf{Non-myopic Pareto Efficient (NPE)} if it is feasible and 
there exists no alternative feasible allocation $\{\tilde x_i^{b}\}_{i \in I}$ and net payment $\{\tilde T_i^b\}_{i \in I}$ 
such that
\begin{equation}\label{eq:pareto_efficiency}
    \sum_{k=1}^{\tilde x_i^b}v_i^b(k)
    +
    J_i^{b+1}\!\left(w_i^b-\tilde T_i^b\right)
    \geq
    \sum_{k=1}^{x_i^b}v_i^b(k)
    +
    J_i^{b+1}\!\left(w_i^b-T_i^b\right)
\end{equation}
for every $i\in I$, and
\begin{equation}\label{eq:auctioneer_pareto}
    \sum_{i\in I}\tilde T_i^b
    \geq
    \sum_{i\in I}T_i^b,
\end{equation} 
with at least one of these inequalities being strict.
\end{definition}

\textbf{Non-myopic incentive compatibility.} The following property requires truthful reporting to maximize an agent's
expected utility from its current allocation and
remaining credits, given truthful reports by the
other agents. 
\begin{definition}[Non-myopic Incentive Compatibility]
Let $\mathbf v_i^b= (v_i^b(1),\ldots,v_i^b(N_i^b))$ denote agent $i$'s true marginal valuation profile
in block $b$, and let $\hat{\mathbf v}_i^b$ denote an arbitrary report. Also, let $\mathbf v_{-i}^b$ denote the true valuation profiles by all other agents. 
For any report profile
$(\hat{\mathbf v}_i^b,\mathbf v_{-i}^b)$, let
$x_i^b(\hat{\mathbf v}_i^b,\mathbf v_{-i}^b)$
and
$T_i^b(\hat{\mathbf v}_i^b,\mathbf v_{-i}^b)$
denote, respectively, the allocation and net credit payment assigned to agent $i$
by the mechanism. 
A mechanism is \textbf{Non-myopic Incentive Compatible (NIC)} if for every $i$ and block $b$, truthful reporting maximizes agent $i$'s expected non-myopic utility over the other agents’ valuation profiles $\mathbf v_{-i}^b$:
\begin{equation}\label{eq:block_ic}
\begin{split}
&\mathbb E_{\mathbf v_{-i}^b}\!\left[
    \sum_{k=1}^{x_i^b(\mathbf v_i^b,\mathbf v_{-i}^b)}
        v_i^b(k)
    +
    J_i^{b+1}\!\left(
        w_i^b-T_i^b(\mathbf v_i^b,\mathbf v_{-i}^b)
    \right)\;\Big|\;\mathbf v_i^b
\right]
\\
&~~\geq
\mathbb E_{\mathbf v_{-i}^b}\!\left[
    \sum_{k=1}^{x_i^b(\hat{\mathbf v}_i^b,\mathbf v_{-i}^b)}
        v_i^b(k)
    +
    J_i^{b+1}\!\left(
        w_i^b-T_i^b(\hat{\mathbf v}_i^b,\mathbf v_{-i}^b)
    \right)\;\Big|\;\mathbf v_i^b
\right],
\end{split}
\end{equation}
for every true valuation profile $\mathbf v_i^b$ and every report $\hat{\mathbf v}_i^b$, where expectations are conditional on $\mathbf v_i^b$ and are taken over the other agents' types and the mechanism's randomization.
\end{definition}

\subsection{An Impossibility Result}
Now, we show that there is no mechanism that satisfies all the properties in \cref{sec: normative}. Our result is a closed-credit analogue of the Myerson--Satterthwaite impossibility theorem \cite{myerson1983efficient}.

\begin{theorem}[Impossibility]\label{thm:impossibility}
Consider a block consisting of a buyer, a seller, and a nontrader with a positive credit balance. If the buyer's and seller's valuations are independently distributed with overlapping supports, no mechanism can simultaneously satisfy NIR, WNN, NPE, and NIC.
\end{theorem}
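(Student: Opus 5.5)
We prove \cref{thm:impossibility} by reducing to the Myerson--Satterthwaite theorem \cite{myerson1983efficient}: we build an instance of our block market that is exactly a bilateral-trade problem with quasilinear utilities. We then show that NIR, NPE, NIC and WNN become, respectively, interim individual rationality, ex post efficiency, Bayesian incentive compatibility and (ex post) weak budget balance. The Myerson--Satterthwaite theorem rules out these four properties together when the valuations are independent with overlapping supports.

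\emph{Construction.} Take $I=\{s,\beta,n\}$: a seller $s$, a buyer $\beta$ and a nontrader $n$ with $w_n^b>0$. Use a single block with $\bar C_b=C_b=1$. Set $e_s^b=1$ and $e_\beta^b=e_n^b=0$, so $\bar e_s^b=1$ and the scaled endowments are integral ($r_i^b=0$). The administrative baseline therefore allocates the item to the seller. Let $N_s^b=N_\beta^b=1$ and $N_n^b=0$, so the nontrader can never receive an item. Take affine value-to-go functions $J_i^{b+1}(w)=w$ (slope $1$), which lie in $\mathcal J$. Utilities are then quasilinear: $x_i v_i - T_i$ plus a constant $w_i^b$. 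Give the buyer a budget $w_\beta^b$ exceeding the upper end of the support of $v_\beta$, so the budget constraint never binds. The seller's ``value'' is $v_s=v_s^b(1)$ and the buyer's is $v_\beta=v_\beta^b(1)$, drawn independently with overlapping supports as in the hypothesis.

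\emph{Translating the axioms.} NIR \eqref{eq:block_ir} becomes $x_\beta v_\beta - T_\beta\ge 0$ for the buyer and $x_s v_s - T_s \ge v_s$ for the seller. This is ex post IR relative to no trade, which is stronger than the interim IR used by Myerson--Satterthwaite. NIC \eqref{eq:block_ic} with identity $J$ is exactly Bayesian IC. For NPE: with quasilinear utilities and the auctioneer's revenue included in \eqref{eq:auctioneer_pareto}, any allocation not maximizing $\sum_i x_i v_i$ can be improved. Reassign the item to the higher-value agent and adjust transfers so that every participant is weakly better off, with the auctioneer's revenue strictly higher. Hence NPE implies that the item goes to $\beta$ whenever $v_\beta>v_s$ and stays with $s$ whenever $v_s>v_\beta$. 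The main detail to check here is that the transfer adjustment preserves feasibility; the ample buyer budget makes the budget constraint slack, and the unrestricted $\tilde T_i$ in \cref{def: NPE} allow arbitrary side payments. Finally, the nontrader $n$ has $w_n^b>0$ and $T_n^b=0$ under any mechanism that does not tax it. By \cref{lem:equivalence}, WNN is then equivalent to WBB \eqref{eq:wbb}: $T_s+T_\beta\ge 0$ ex post.

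\emph{Main obstacle.} The delicate point is ensuring that $n$ is genuinely a nontrader ($T_n^b=0$), so that \cref{lem:equivalence} applies. A mechanism could try to extract credits from $n$ to subsidize trade. We handle this with NIR: $n$ receives no item ($N_n^b=0$), so NIR forces $T_n^b\le 0$. If $T_n^b<0$, the mechanism pays $n$, which only worsens the budget deficit. The Myerson--Satterthwaite argument applies to the combined transfers of the other agents in that case too, so $n$ cannot help finance trade. Combining the pieces, any mechanism satisfying all four properties would give a bilateral-trade mechanism that is efficient, BIC, interim IR and weakly budget balanced, with independent overlapping supports. This contradicts \cite{myerson1983efficient}.
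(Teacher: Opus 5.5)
Your reduction is the paper's route: affine $J$ makes the block a quasilinear bilateral trade, \cref{lem:equivalence} turns WNN into WBB through the positive-balance nontrader, and Myerson--Satterthwaite finishes. Your explicit instance and your translation of NIR and NIC are fine.

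The step that fails is your treatment of the case $T_n^b<0$. Once the mechanism pays $n$, $n$ is no longer a nontrader, so \cref{lem:equivalence} gives nothing. If the buyer and seller also have nonzero transfers, as they do whenever trade occurs, the block has no nontrader at all and WNN is vacuous. Your remark that paying $n$ only worsens the deficit presupposes exactly the budget constraint that has been lost.

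With the unrestricted transfers you allow, this cannot be repaired. Let $v_s,v_\beta$ be independent uniform on $[1,2]$. Run VCG (on trade, $T_\beta^b=v_s$ and $T_s^b=-v_\beta$), and additionally pay $n$ some $\epsilon>0$ whenever trade occurs. This mechanism has all four properties:
\begin{itemize}
\item It is DSIC, since the payment to $n$ does not touch the buyer's or seller's incentives.
\item It is ex post IR for all three agents.
\item It is allocatively efficient and hence NPE: with $J(w)=w$, the agents' utilities plus the auctioneer's revenue equal total allocative value plus $\sum_i w_i^b$.
\item It satisfies WNN, with equality when there is no trade (all transfers vanish) and vacuously when there is trade (then no agent has $T_i^b=0$).
\end{itemize}

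The missing ingredient is the paper's payment rule. A mechanism in this model outputs a feasible allocation--price pair and charges $T_i^b=(x_i^b-\bar e_i^b)p_b$. Your choice $N_n^b=0$, $\bar e_n^b=0$ therefore forces $T_n^b\equiv 0$, so $n$ is a nontrader in every realization. This is what the theorem's hypothesis encodes, and it is what the paper's proof uses when it invokes \cref{lem:equivalence}. Ex post WBB then follows directly. Since uniform-price mechanisms are a subclass of bilateral-trade mechanisms, Myerson--Satterthwaite still applies. Replace your paragraph on taxing or paying $n$ with this observation.

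If you also read the NPE definition as requiring the alternative to have uniform-price form, your NPE step survives. When $v_\beta>v_s$ and the seller keeps the item, trading at any $\tilde p\in(v_s,v_\beta)$ with $\tilde p\le w_\beta^b$ makes both strictly better off, and $\sum_i\tilde T_i^b=0=\sum_i T_i^b$. Trade when $v_s>v_\beta$ is already excluded by NIR, since it would need $v_s\le p_b\le v_\beta$.
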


\begin{proof}
It suffices to establish impossibility for the base case where the predicted-value-to-go function $J_i^{b+1}(w)$ is affine in $w$. Under affine predicted-value-to-go function, the relationship~\eqref{eq:quasilinear_marginal} shows that the setting reduces to a standard bilateral trade environment with quasilinear preferences and normalized valuations. Furthermore, by Lemma~\ref{lem:equivalence}, WNN implies WBB in the presence of the positive-balance nontrader. Thus, any mechanism satisfying NPE, NIC, NIR, and WNN would induce a bilateral trading mechanism satisfying Pareto efficiency,  incentive compatibility, individual rationality, and weak budget balance under quasilinear preferences, contradicting the Myerson--Satterthwaite theorem \cite{myerson1983efficient}.
\end{proof}

Thus, a closed-credit economy does not in general
avoid the classical impossibility of jointly achieving
efficiency, incentive compatibility, individual
rationality, and budget balance.

\section{Mechanism Design}
\label{sec: mechanism design}
Motivated by \cref{thm:impossibility}, we prioritize
NIR and WNN: NIR protects participation in the \emph{current
block}, while WNN preserves nontraders' shares of total
credits \emph{across blocks}. We propose an ascending auction with subgrouping that satisfies these requirements
and analyze its efficiency and incentive guarantees.

\subsection{Proposed Mechanism}
Our mechanism has three steps: group splitting procedure, subgroup-based ascending auctions, and aggregate allocation. We present each step below, and \cref{alg:main_auction} summarizes the mechanism.

\subsubsection{Group Splitting Procedure}\label{sec:group_splitting}
For each block $b$, the procedure $\textsc{SplitAgentsAndCapacities}(b,G)$ partitions the agents into $G$ balanced subgroups $\{I_g\}_{g=1}^G$ and assigns integer subgroup capacities that sum to $\bar C_b$. Specifically, the procedure runs as follows.
\begin{enumerate}
    \item Sort agents in descending order of scaled endowments $\bar e_i^b$. When there is a tie, use the credit balance $w_i^b$ to break it. 
    \item Sequentially assign the agents to the subgroup with the lowest aggregate scaled endowment.
    \item Each subgroup $I_g$ is then assigned $\lfloor\widetilde C_{b,g}\rfloor$ items , where $\widetilde C_{b,g}$ is the target capacity: $\widetilde C_{b,g}
    =
    \bar C_b \cdot (\sum_{i\in I_g}\bar e_i^b) / (\sum_{i\in I}\bar e_i^b)$.
    \item The remaining $R = \bar C_b-\sum_{g=1}^G\lfloor\widetilde C_{b,g}\rfloor$ items are assigned according to the Largest Remainder (Hamilton) method. That is, they are assigned one at a time to the subgroups with the largest fractional remainders.
\end{enumerate}
We denote the final capacity by $\bar C_{b,g}$. The above procedure guarantees that $\sum_{g=1}^G\bar C_{b,g}=\bar C_b$.



\subsubsection{Subgroup-Based Ascending Auction}\label{sec:ascending_auction}
After constructing the subgroups, we run an ascending price-discovery procedure separately within each subgroup.

In block $b$, for a given price $p\ge 0$, agent $i$ reports its integer demand $D_i^b(p)$. Any valid demand submission must satisfy domain and budget constraints in \cref{def:feas_alloc}: $D_i^b(p) \in \{0,\ldots,N_i^b\}$ and $(D_i^b(p)-\bar e_i^b)p\leq w_i^b$. When agent $i$ is truthful, its reported demand corresponds to an optimal solution to its dynamic choice problem,
\begin{equation}\label{eq:mechanism_demand}
D_i^b(p)
\in
\argmax_{K\in\{0,\ldots,N_i^b\}}
\left\{
    \sum_{k=1}^K v_i^b(k)
    +
    J_i^{b+1}\!\left(w_i^b-(K-\bar e_i^b)p\right)
\right\}
\end{equation}
subject to $(K-\bar e_i^b)p\leq w_i^b$. If the maximizer is not unique, the agent selects the largest maximizer.  We let $D_g^b(p) = \sum_{i \in I_g} D_i^b(p)$ denote the aggregate demand.

We require each agent's reported demand $D_i^b(p)$ (and thus aggregate demand $D_g^b(p)$) to be nonincreasing in $p$. Under truthful reporting, agents automatically satisfy this requirement whenever $v_i^b(k)$ is nonincreasing in $k$ and $J_i^{b+1}(w)\in \mathcal{J}$, as established in \cref{prop:optimal_demand_price_monotonicity}.
Leveraging this price monotonicity, the procedure $\textsc{SubgroupPriceDiscovery}(I_g,\bar C_{b,g},b,\Delta p)$ runs as follows.
\begin{enumerate}
    \item $p \leftarrow 0$.
    \item Consider three situations and increase the price when there is over demand.
    \begin{enumerate}
        \item If $D_g^b(p) < \bar{C}_{b,g}$, then return $\max\{0, p - \Delta p\}$.
        \item If $D_g^b(p) = \bar{C}_{b,g}$, then return $p$.
        \item If $D_g^b(p) > \bar{C}_{b,g}$, then $p\gets p+\Delta p$ and repeat.
    \end{enumerate}
\end{enumerate}
Let $p_{b,g}$ denote the resulting price in subgroup $I_g$.

\begin{algorithm}[t]
\caption{Block Dynamic Ascending Auction with Subgroup Rationing}
\label{alg:main_auction}
\begin{algorithmic}[1]
\algrenewcommand\algorithmicindent{1.2em}
\Procedure{RunBlockAuction}{$b,G,\Delta p$}
    \State $\{I_g,\bar C_{b,g}\}_{g=1}^G
    \gets\textsc{SplitAgentsAndCapacities}(b,G)$

    \For{$g=1$ \textbf{to} $G$}
        \State $p_{b,g}\gets
        \textsc{SubgroupPriceDiscovery}
        (I_g,\bar C_{b,g},b,\Delta p)$
    \EndFor
    \State $p_b^*\gets\min_{1\leq g\leq G}p_{b,g}$

    \For{$g=1$ \textbf{to} $G$}
        \State $D_g^b(p_b^*)\gets\sum_{i\in I_g}D_i^b(p_b^*)$
        \If{$D_g^b(p_b^*)\leq\bar C_{b,g}$}
            \State $x_i^{b,*}\gets D_i^b(p_b^*),
            \quad\forall i\in I_g$
        \Else
            \For{\textbf{each} $i \in I_g$}
                \State $x_i^{\text{guar}} \gets \min\{D_i^b(p_b^*), \lfloor \bar{e}_i^b \rfloor\}$ 
                \State $x_i^{b,*}\gets x_i^{\mathrm{guar}}$
            \EndFor
            \State $R_g\gets
            \bar C_{b,g}
            -\sum_{i\in I_g}x_i^{\text{guar}}$
            \State Draw a random ordering $\pi_g$ of the agents in $I_g$,
             \Statex \hspace{11.7mm} independently of their reported demands.
            \For{\textbf{each} $i$ in the order $\pi_g$}
            \State $y_i\gets
            \min\{D_i^b(p_b^*)-x_i^{\mathrm{guar}},\,R_g\}$
            \State $x_i^{b,*}\gets x_i^{b,*}+y_i, \quad R_g\gets R_g-y_i$
            \If{$R_g=0$}
                \State \textbf{break}
            \EndIf
        \EndFor
        \EndIf
    \EndFor

    \For{\textbf{each} $i\in I$}
        \State $T_i^b\gets
        (x_i^{b,*}-\bar e_i^b)p_b^*$
        \State $\widehat w_i^{b+1}\gets w_i^b-T_i^b$
    \EndFor

    \For{\textbf{each} $i\in I$}\footnotemark
        \State $w_i^{b+1}\gets
        \widehat w_i^{b+1}
        \dfrac{\sum_{j\in I}w_j^b}
              {\sum_{j\in I}\widehat w_j^{b+1}}$
    \EndFor

    \State \Return
    $\{x_i^{b,*}\}_{i\in I},p_b^*, \{T_i^{b}\}_{i\in I}, \{w_i^{b+1}\}_{i\in I}$
\EndProcedure
\end{algorithmic}
\end{algorithm}
\footnotetext{Since a uniform rescaling of all credit balances preserves relative purchasing power of all agents, the regulatory authority may normalize end-of-block balances to maintain a stationary aggregate credit supply across sequential blocks.}

\subsubsection{Aggregate Allocation}
After the ascending auctions for all the subgroups, we set the common transaction price to the minimum subgroup price, $p_b^*=\min_{1\leq g\leq G}p_{b,g}$. Because $p_b^*$ is no larger than any subgroup discovery price, the demands at $p_b^*$ have already been encountered during the corresponding ascending procedures and may be reused.

At the common unit price $p_b^*$, a subgroup whose aggregate demand does not exceed its assigned capacity receives its demanded allocation. If a subgroup is oversubscribed, each agent first receives up to the integer part of its scaled endowment, subject to its demand. The remaining capacity is rationed by random ordering rule among the agents' excess demands. \cref{alg:main_auction} describes the whole procedure.

\subsubsection{Block Transitions}
After running \cref{alg:main_auction} in block $b$,
we update each agent's maximum resource requirement
for the next block. Agent $i$ has
$N_i^b-x_i^{b,*}$ units of unmet demand, which it
may withdraw or carry forward. Let $\ell_i^b$ denote
the number of units carried forward, where
$\ell_i^b\in\{0,\ldots,N_i^b-x_i^{b,*}\}$. We set $N_i^{b+1}=e_i^{b+1}+\ell_i^b$. Thus, the next block's maximum resource requirement
combines the agent's initial endowment for that
block with its carried-over unmet demand. If all
unmet demand is carried forward, this becomes
$N_i^{b+1}=e_i^{b+1}+N_i^b-x_i^{b,*}$.


\subsection{Theoretical Analysis}

\subsubsection{Feasibility and Market Clearance}
We first establish that the mechanism produces a feasible allocation-price pair (see Definition \ref{def:feas_alloc}) and clears the block capacity, i.e., no capacity is left unallocated whenever $p_b^*>0$. 
\begin{theorem}
\label{thm:mechanism_properties}
If each agent's reported demand $D_i^b(p)$ follows \eqref{eq:mechanism_demand},
\cref{alg:main_auction} gives a feasible allocation-price pair $(\{x_i^{b,*}\}_{i\in I}, p_b^*)$ that achieves market clearance: $p_b^* > 0$ implies $\sum_{i\in I}x_i^{b,*} = \bar C_b$.
\end{theorem}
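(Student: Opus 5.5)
The plan is to verify the four conditions of \cref{def:feas_alloc} one at a time, and then to prove market clearance subgroup by subgroup. I will use three facts throughout.
\begin{enumerate}
\item Every $x_i^{b,*}$ satisfies $0\le x_i^{b,*}\le D_i^b(p_b^*)$, which can be read off each branch of \cref{alg:main_auction}.
\item Truthful demands \eqref{eq:mechanism_demand} are domain- and budget-feasible at every price queried.
\item Aggregate demand is nonincreasing in $p$. This is the standing requirement, and it is guaranteed by \cref{prop:optimal_demand_price_monotonicity} under its hypotheses.
\end{enumerate}
I will also note that $\textsc{SubgroupPriceDiscovery}$ terminates. At large enough $p$, the budget constraint $(D-\bar e_i^b)p\le w_i^b$ forces $D_i^b(p)\le \lceil \bar e_i^b\rceil$, or even $\le \bar e_i^b$ once $p$ is huge. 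Hence $D_g^b(p)$ eventually does not exceed $\bar C_{b,g}$; I would state this as a mild assumption if needed.

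\textbf{Feasibility.} Price nonnegativity is immediate, since each $p_{b,g}\ge 0$ by construction (note the $\max\{0,\cdot\}$), so their minimum $p_b^*$ is also nonnegative. The domain constraint follows from $0\le x_i^{b,*}\le D_i^b(p_b^*)\le N_i^b$. For the budget constraint, $p_b^*\ge0$ and $x_i^{b,*}\le D_i^b(p_b^*)$ give $(x_i^{b,*}-\bar e_i^b)p_b^*\le (D_i^b(p_b^*)-\bar e_i^b)p_b^*\le w_i^b$. Here I use that $D_i^b(p_b^*)$ is a valid submission; it was already encountered during discovery because $p_b^*\le p_{b,g}$.

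For capacity I argue per subgroup and then sum, using $\sum_g\bar C_{b,g}=\bar C_b$. A non-oversubscribed subgroup is immediate. In an oversubscribed subgroup, the rationing loop never allocates more than $R_g$, so it suffices to show $R_g\ge0$. The key estimate is
\[
\sum_{i\in I_g}x_i^{\rm guar}\le\sum_{i\in I_g}\lfloor\bar e_i^b\rfloor\le\Bigl\lfloor\sum_{i\in I_g}\bar e_i^b\Bigr\rfloor\le\lfloor\widetilde C_{b,g}\rfloor\le\bar C_{b,g}.
\]
The third inequality holds because $\sum_{i\in I}\bar e_i^b\le\bar C_b$, so the ratio $\bar C_b/\sum_{i\in I}\bar e_i^b\ge1$ in the definition of $\widetilde C_{b,g}$. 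The last inequality holds because Hamilton's method only adds items to the floors.

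\textbf{Market clearance.} Suppose $p_b^*>0$, so every $p_{b,g}>0$. I will show that $D_g^b(p_{b,g})\ge\bar C_{b,g}$ for each $g$ by cases on how discovery returned.
\begin{enumerate}
\item Case (b) returns $p$ with $D_g^b(p)=\bar C_{b,g}$.
\item Case (a) returns $p-\Delta p$, and that value is positive here. So $p-\Delta p$ was a previously queried price at which case (c) applied, giving $D_g^b(p-\Delta p)>\bar C_{b,g}$.
\end{enumerate}
Monotonicity and $p_b^*\le p_{b,g}$ then give $D_g^b(p_b^*)\ge\bar C_{b,g}$. If equality holds, the first branch allocates exactly $\bar C_{b,g}$. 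Otherwise the subgroup is oversubscribed and the total residual demand satisfies
\[
\sum_{i\in I_g}\bigl(D_i^b(p_b^*)-x_i^{\rm guar}\bigr)=D_g^b(p_b^*)-\sum_{i\in I_g} x_i^{\rm guar}>R_g .
\]
Hence the random-order loop runs until $R_g=0$, and the subgroup receives exactly $\bar C_{b,g}$. Summing over $g$ yields $\sum_i x_i^{b,*}=\bar C_b$.

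I expect the main obstacle to be the capacity bound $R_g\ge0$ in oversubscribed subgroups. It depends on chaining the floor inequalities with the proportional target capacities and on the fact that Hamilton apportionment never assigns a subgroup less than its floor. The case analysis of the discovery return, especially the boundary case $p=0$ excluded by $p_b^*>0$, is the other delicate point.
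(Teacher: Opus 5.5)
Your proposal is correct and follows the same route as the paper. You check the four feasibility conditions via $x_i^{b,*}\le D_i^b(p_b^*)$, then use $p_b^*\le p_{b,g}$ and monotone demand to get $D_g^b(p_b^*)\ge\bar C_{b,g}$ in every subgroup when $p_b^*>0$. Your extra steps fill in points the paper's proof only asserts: the check that $R_g\ge 0$ (via $\sum_{i\in I_g}\lfloor\bar e_i^b\rfloor\le\lfloor\widetilde C_{b,g}\rfloor\le\bar C_{b,g}$), the explicit case analysis of the discovery return, and the termination remark.
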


\subsubsection{Individual Rationality, Nontrader Protection, and Efficiency}
We establish NIR and WNN for every number of subgroups
$G$ and provide sufficient conditions for NPE when $G=1$. Note that, by \cref{thm:impossibility}, NIR and WNN cannot
in general be combined with both NPE and NIC.

\begin{theorem}
\label{thm:mechanism_properties2}
Consider \cref{alg:main_auction} under the conditions
of \cref{thm:mechanism_properties}.
Suppose that the assumptions of
\cref{prop:optimal_demand} and
\cref{lem:equivalence} hold. Then:
\begin{enumerate}
    \item \textbf{NIR and WNN.}
    The mechanism satisfies NIR and WNN for every
    number of subgroups $G$.

    \item \textbf{NPE under optimal-demand clearing.}
    \label{item:unified_npe}
    If $G=1$, each predicted-value-to-go
    $J_i^{b+1}$ is strictly increasing,  all agents
report their demands truthfully, and every agent receives their reported demand\footnote{In this case, NPE holds if either $\sum_{i\in I}D_i^b(0)\leq\bar C_b$ or $\sum_{i\in I}D_i^b(k\Delta p)=\bar C_b$ for some $k\in\mathbb Z_{\geq0}$, since all reported demands are fully allocated. Otherwise, price-grid jumps may require rationing, which need not preserve individual optimality or NPE. Designing efficiency-preserving rationing rules is left for future work.} at price $p_b^*$, then the allocation and associated transfers
    $T_i^b=(x_i^{b,*}-\bar e_i^b)p_b^*$
    are NPE.
\end{enumerate}
\end{theorem}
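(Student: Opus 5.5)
The plan is to prove the two items separately. Item 1 comes from the optimality of reported demand together with an accounting identity. Item 2 is a first-welfare-theorem argument at the common price $p_b^*$.

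\textbf{NIR.} Fix agent $i$ and write $U_i(K)=\sum_{k=1}^{K}v_i^b(k)+J_i^{b+1}\bigl(w_i^b-(K-\bar e_i^b)p_b^*\bigr)$, and let $D=D_i^b(p_b^*)$.

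First I show that $U_i$ is nondecreasing on $\{0,\dots,D\}$.
\begin{itemize}
\item By \cref{prop:optimal_demand}, $\Delta_i^b(k,p_b^*,w_i^b,\bar e_i^b)$ is nonincreasing in $k$.
\item $D$ is the largest feasible $k$ with $\Delta_i^b\geq 0$, so $\Delta_i^b(k,\cdot)\geq0$ for all $k\le D$.
\item Every $K\le D$ is budget-feasible, since $(K-\bar e_i^b)p_b^*$ is increasing in $K$.
\end{itemize}

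Next I inspect \cref{alg:main_auction} to check that it always returns $x_i^{b,*}\in[\min\{D,\lfloor\bar e_i^b\rfloor\},D]$. In the unrationed case $x_i^{b,*}=D$. In the rationed case the guaranteed part is $\min\{D,\lfloor\bar e_i^b\rfloor\}$ and the top-up never exceeds $D$.

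Hence $U_i(x_i^{b,*})\ge U_i(\min\{D,\lfloor\bar e_i^b\rfloor\})$. There are two cases.
\begin{itemize}
\item If $D\ge\lfloor\bar e_i^b\rfloor$, this gives $U_i(x_i^{b,*})\ge U_i(\lfloor\bar e_i^b\rfloor)=\sum_{k\le\lfloor\bar e_i^b\rfloor}v_i^b(k)+J_i^{b+1}(w_i^b+r_i^bp_b^*)$.
\item If $D<\lfloor\bar e_i^b\rfloor$, then $x_i^{b,*}=D$ is optimal over all feasible $K$. In particular it beats $K=\lfloor\bar e_i^b\rfloor$, which is feasible because $(\lfloor\bar e_i^b\rfloor-\bar e_i^b)p_b^*\le0\le w_i^b$.
\end{itemize}
In both cases, since $J_i^{b+1}$ is nondecreasing, $J_i^{b+1}(w_i^b+r_i^bp_b^*)\ge J_i^{b+1}(w_i^b)$, which yields \eqref{eq:block_ir}.

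\textbf{WNN.} I first establish WBB. We have $\sum_iT_i^b=p_b^*\bigl(\sum_ix_i^{b,*}-\sum_i\bar e_i^b\bigr)$.
\begin{itemize}
\item If $p_b^*=0$, this sum is $0$.
\item If $p_b^*>0$, \cref{thm:mechanism_properties} gives $\sum_ix_i^{b,*}=\bar C_b\ge\sum_i\bar e_i^b$, so the sum is nonnegative.
\end{itemize}
Thus $\sum_i\widehat w_i^{b+1}\le\sum_iw_i^b$. For a nontrader, $\widehat w_i^{b+1}=w_i^b$. The final renormalization multiplies all balances by a common factor, so shares are unchanged: $w_i^{b+1}/\sum_jw_j^{b+1}=\widehat w_i^{b+1}/\sum_j\widehat w_j^{b+1}\ge w_i^b/\sum_jw_j^b$. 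This is exactly the computation in \cref{lem:equivalence}, now applied to the pre-normalization balances. None of this depends on $G$.

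\textbf{NPE for $G=1$.} Suppose a feasible $(\tilde x,\tilde T)$ satisfies \eqref{eq:pareto_efficiency} for all $i$ and \eqref{eq:auctioneer_pareto}. I claim that for each $i$,
\[\tilde T_i^b\le(\tilde x_i^b-\bar e_i^b)p_b^*,\]
with strict inequality whenever $i$'s inequality in \eqref{eq:pareto_efficiency} is strict. There are two cases.
\begin{itemize}
\item If $\tilde x_i^b$ is budget-feasible at $p_b^*$, then because $x_i^{b,*}=D_i^b(p_b^*)$ maximizes \eqref{eq:mechanism_demand},
\[U_i(\tilde x_i^b)\le U_i(x_i^{b,*})\le\sum_{k\le\tilde x_i^b}v_i^b(k)+J_i^{b+1}(w_i^b-\tilde T_i^b).\]
Strict monotonicity of $J_i^{b+1}$ then forces $w_i^b-\tilde T_i^b\ge w_i^b-(\tilde x_i^b-\bar e_i^b)p_b^*$, strictly if the Pareto inequality is strict.
\item If $\tilde x_i^b$ is not budget-feasible at $p_b^*$, then $(\tilde x_i^b-\bar e_i^b)p_b^*>w_i^b\ge\tilde T_i^b$. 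The last inequality holds because $J_i^{b+1}$ is defined only on $[0,\infty)$, so a feasible alternative must leave nonnegative balances.
\end{itemize}

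Summing the claim over $i$ and using capacity, $\sum_i\tilde T_i^b\le p_b^*\bigl(\sum_i\tilde x_i^b-\sum_i\bar e_i^b\bigr)\le p_b^*\bigl(\bar C_b-\sum_i\bar e_i^b\bigr)$. The right-hand side equals $\sum_iT_i^b$: for $p_b^*>0$ by market clearance (\cref{thm:mechanism_properties}), and trivially for $p_b^*=0$. So $\sum_i\tilde T_i^b\le\sum_iT_i^b$, strictly if any agent's inequality is strict. This contradicts \eqref{eq:auctioneer_pareto} unless every inequality is an equality, so no strict improvement exists.

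I expect the main obstacle to be this last step: pinning down exactly what "feasible alternative net payment" permits. The argument uses $w_i^b-\tilde T_i^b\ge0$ for agents whose alternative bundle is unaffordable at $p_b^*$, and strict monotonicity of $J_i^{b+1}$ to turn utility comparisons into payment comparisons. I would make the nonnegativity requirement explicit as part of feasibility, since \cref{def:feas_alloc} is stated only for price-based transfers.
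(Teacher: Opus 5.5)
Your proposal is correct and follows the paper's proof essentially step for step. NIR comes from nonincreasing marginal gains together with the rationing bounds $\min\{D_i^b(p_b^*),\lfloor\bar e_i^b\rfloor\}\le x_i^{b,*}\le D_i^b(p_b^*)$, split by whether $D_i^b(p_b^*)\ge\lfloor\bar e_i^b\rfloor$; WNN reduces to WBB via market clearance; and NPE sums the per-agent bound $\tilde T_i^b\le(\tilde x_i^b-\bar e_i^b)p_b^*$, strict for strictly better-off agents, against the capacity constraint. Your extra care only makes explicit what the paper leaves implicit: the rebate step $J_i^{b+1}(w_i^b+r_i^bp_b^*)\ge J_i^{b+1}(w_i^b)$, the fact that the final renormalization in \cref{alg:main_auction} preserves shares (so \cref{lem:equivalence} is really applied to $\widehat w_i^{b+1}$), and the nonnegativity of alternative balances $w_i^b-\tilde T_i^b$.
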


\subsubsection{Approximate Incentive Compatibility}
We quantify the incentive effects of subgrouping by
bounding the expected gain from misreporting in terms
of the probability that an agent changes the common price.

Fix a block $b$, its public state, and the
predicted-value-to-go functions. 
For a report profile $\hat{\mathbf v}^b$, define
agent $i$'s predicted non-myopic utility, evaluated
at its true valuations, as
\[
U_i^b(\hat{\mathbf v}^b;\mathbf v_i^b)
=
\sum_{k=1}^{x_i^{b,*}(\hat{\mathbf v}^b)}v_i^b(k)
+
J_i^{b+1}\!\left(
w_i^b-T_i^b(\hat{\mathbf v}^b)
\right).
\]
Expectations are taken over the other agents' types
and the mechanism's random ordering.

\begin{definition}[Approximate NIC]
A mechanism is $\epsilon$-NIC in block $b$ if,
for every agent $i$, true type $\mathbf v_i^b$,
and admissible report $\hat{\mathbf v}_i^b$,
\[
\begin{aligned}
&\mathbb E\!\left[
U_i^b((\hat{\mathbf v}_i^b,\mathbf v_{-i}^b);
       \mathbf v_i^b)
-
U_i^b((\mathbf v_i^b,\mathbf v_{-i}^b);
       \mathbf v_i^b)
\,\big|\,\mathbf v_i^b
\right]
\leq\epsilon.
\end{aligned}
\]
Thus, truthful reporting is optimal up to 
expected utility gain $\epsilon$.
\end{definition}

Define the maximum probability that a unilateral
misreport changes the common price by
\[
q_G^b
:=
\sup_{i,\mathbf v_i^b,\hat{\mathbf v}_i^b}
\mathbb P\big(
p_b^*(\hat{\mathbf v}_i^b,\mathbf v_{-i}^b)
\neq
p_b^*(\mathbf v_i^b,\mathbf v_{-i}^b)
\,\big|\,\mathbf v_i^b
\big),
\]
where the supremum is over admissible types and reports.

\begin{theorem}[Approximate NIC under subgrouping]
\label{thm:approximate-NIC}
Consider \cref{alg:main_auction} under the conditions of
\cref{thm:mechanism_properties,prop:optimal_demand}.
Suppose that individual total valuations satisfy the uniform bound $\sup_{i\in I,\mathbf v_i^b}
\sum_{k=1}^{N_i^b}v_i^b(k)
\leq \overline V_b<\infty$, and each predicted-value-to-go $J_i^{b+1}$ is
finite-valued and nondecreasing. Define
\[
W_b:=\sum_{i\in I}w_i^b,
\qquad
M_b:=\overline V_b+
\max_{i\in I}
\left\{J_i^{b+1}(W_b)-J_i^{b+1}(0)\right\}.
\]
Then the mechanism is $(M_bq_G^b)$-NIC.
For a fixed market, $M_b$ is independent of $G$.
In particular, if $q_G^b\leq c/G$ for a constant $c$
independent of $G$, the mechanism is $(M_bc/G)$-NIC.
\end{theorem}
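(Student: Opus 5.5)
The plan is to condition on whether agent $i$'s misreport changes the common price. Fix $i$, its true type $\mathbf v_i^b$, and a report $\hat{\mathbf v}_i^b$. Couple the truthful and misreported runs by using the same realization of $\mathbf v_{-i}^b$ and the same random orderings $\{\pi_g\}$. Since $\pi_g$ is drawn independently of reported demands, this coupling is legitimate. Let $E$ be the event $p_b^*(\hat{\mathbf v}_i^b,\mathbf v_{-i}^b)\neq p_b^*(\mathbf v_i^b,\mathbf v_{-i}^b)$. Then
\[
\mathbb E[U_i^b(\hat{\cdot})-U_i^b(\cdot)\mid \mathbf v_i^b]
=\mathbb E[(U_i^b(\hat{\cdot})-U_i^b(\cdot))\mathbf 1_{E}]+\mathbb E[(U_i^b(\hat{\cdot})-U_i^b(\cdot))\mathbf 1_{E^c}].
\]
I will show that the second term is $\le 0$ pointwise and that the first is at most $M_b\,\mathbb P(E\mid\mathbf v_i^b)\le M_b q_G^b$.

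\emph{Uniform bound on $E$.} In any run, the allocation value lies in $[0,\overline V_b]$. For the credits, the budget constraint in \cref{thm:mechanism_properties} gives $w_i^b-T_i^b\ge 0$. For an upper bound, WBB (part 1 of \cref{thm:mechanism_properties2}, through \cref{lem:equivalence}) gives $\sum_j T_j^b\ge 0$, and every other agent satisfies $w_j^b-T_j^b\ge0$. Hence
\[
w_i^b-T_i^b\le \sum_j (w_j^b-T_j^b)\le W_b.
\]
Since $J_i^{b+1}$ is nondecreasing, any two utilities of agent $i$ differ by at most $\overline V_b+J_i^{b+1}(W_b)-J_i^{b+1}(0)\le M_b$. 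This bounds the first term. The quantity $M_b$ depends only on the block primitives, not on $G$, and the last claim of the theorem follows immediately.

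\emph{No gain on $E^c$.} This is the main step. Here both runs use the same price $p:=p_b^*$, and only $i$'s own demand changes, from $D:=D_i^b(p)$ to $\hat D$. I will show that, in both runs, $i$'s allocation has the form $\min\{D,c_i\}$ and $\min\{\hat D,c_i\}$ respectively, where $c_i$ depends only on the others' demands at $p$, the subgroup capacity, $\lfloor\bar e_i^b\rfloor$, and $\pi_g$.

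\begin{itemize}
\item In the oversubscribed branch, the guarantee $\min\{D,\lfloor\bar e_i^b\rfloor\}$ plus the residual share gives $x_i=\min\{D,\ \lfloor\bar e_i^b\rfloor+\max(0,\bar C_{b,g}-\sum_{j\ne i}x_j^{\rm guar}-\lfloor\bar e_i^b\rfloor-\text{(allocations to agents before $i$ in $\pi_g$)})\}$. Agent $i$'s own guarantee cancels, so the cap is independent of $i$'s report.
\item In the non-oversubscribed branch, $x_i=D\le \bar C_{b,g}-\sum_{j\neq i}D_j$, which is consistent with the same cap.
\end{itemize}

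I expect the delicate part to be checking that switching between the two branches (which depends on $\hat D$) does not break the $\min\{\cdot,c_i\}$ form. I would handle this by case analysis on whether $\sum_{j\neq i}D_j+D$ and $\sum_{j\ne i}D_j+\hat D$ exceed $\bar C_{b,g}$.

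Given this form, I would apply \cref{prop:optimal_demand}. At fixed $p$, $K\mapsto \sum_{k\le K}v_i^b(k)+J_i^{b+1}(w_i^b-(K-\bar e_i^b)p)$ has nonincreasing increments $\Delta_i^b(k,p,w_i^b,\bar e_i^b)$, so it is unimodal, with $D$ a maximizer over the budget-feasible set. Then:
\begin{itemize}
\item if $c_i\ge D$, the truthful allocation $D$ is optimal among feasible quantities, so truthful reporting is weakly better;
\item if $c_i<D$, then $\min\{\hat D,c_i\}\le c_i$, and on $\{0,\dots,D\}$ the objective is nondecreasing, so the truthful allocation $c_i$ weakly dominates.
\end{itemize}
Feasibility of $\hat D$ (admissible reports satisfy the budget constraint) keeps all comparisons within the domain of \cref{prop:optimal_demand}. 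Hence the $E^c$ term is $\le 0$, and combining the two terms yields $(M_bq_G^b)$-NIC.
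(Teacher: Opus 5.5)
Your proposal is correct and follows the paper's proof essentially step for step: you use the same coupling on $\mathbf v_{-i}^b$ and the orderings, split on whether the common price changes, bound $0\le w_i^b-T_i^b\le W_b$ on that event via budget feasibility and WBB, and on the complement reduce to a report-independent cap $x_i=\min\{d,K_i\}$ combined with the nonincreasing marginal gains of \cref{prop:optimal_demand}. The branch-switching case analysis you anticipate is unnecessary: the paper writes the single cap $K_i=\max\{\lfloor\bar e_i^b\rfloor,\bar C_{b,g}-\sum_{j\in I_g\setminus\{i\}}r_j-\sum_{j\prec i}(D_j^b(p)-r_j)\}$ with $r_j=\min\{D_j^b(p),\lfloor\bar e_j^b\rfloor\}$, and your own inequality $D\le\bar C_{b,g}-\sum_{j\ne i}D_j^b(p)\le K_i$ already shows this cap is valid in both branches (using the others' demands rather than their realized allocations also makes report-independence immediate).
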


The bound $M_bq_G^b$ explains how increasing $G$ can
improve incentive compatibility: additional subgroups
provide competing prices beyond an individual agent's
control, potentially reducing its influence on the
common price. Since $p_b^*=\min_g p_{b,g}$, a deviation
can change this price only if the agent's subgroup
attains the truthful minimum or its deviation lowers
its subgroup price below that minimum. When
$q_G^b\leq c/G$, the resulting guarantee improves
with $G$ for a fixed market.

This improvement is conditional, since smaller
subgroups may also be more sensitive to individual
reports. Balancing endowments alone therefore does
not guarantee that $q_G^b$ decreases with $G$.

\section{Experiments}
We simulate the sequential block auction market for arrival slots at San Francisco International Airport (SFO) on August 20, 2026.
The arrival schedule published the day before \cite{flightstats_sfo} (591 arrivals, 52 airlines) serves as the endowment and as a proxy for the months-ahead allocation under current practice. Nominal capacity is 52 arrivals/hr \cite{faa_sfo_capacity_2019}; we consider a mild (S1) and a severe (S2) capacity reduction (Fig.~\ref{fig:allocations-per-hour}).

\paragraph{Delay and cancellation costs.} Aircraft type and seats $S_f$ come from BTS T-100 Segment data, matched by operating carrier and origin--SFO route \cite{bts_transtats}. For arrival delay $d$ (assigned slot minus scheduled time, in minutes), the cost of flight $f$ is
\[
c_f(d) = \rho_f\,(d - g)^+ + \rho_f\,(d - b_f)^+ .
\]
The first term is the cost of this arrival and the second is the delay propagated to the aircraft's next departure. $\rho_f$ is the airline's operating cost per minute for the aircraft type (BTS Form 41, Schedules P-5.2 and P-7 \cite{bts_transtats}) and $g = 15$ min is the on-time threshold \cite{bts_ontime}. We assume a buffer time $b_f \sim U[15,20]$ min, the slack the aircraft can absorb before the delay cuts into its turnaround and propagates to the next departure. We assume a flight is canceled once its delay exceeds 3 hours, the U.S. DOT threshold for a significant delay on domestic itineraries, beyond which passengers are entitled to a refund~\cite{dot_refunds}. The cancellation cost is a refund of \$305 per passenger, the average U.S. domestic one-way itinerary fare~\cite{bts_fares_2026q1}.

\begin{figure}
    \centering
    \includegraphics[width=1\linewidth]{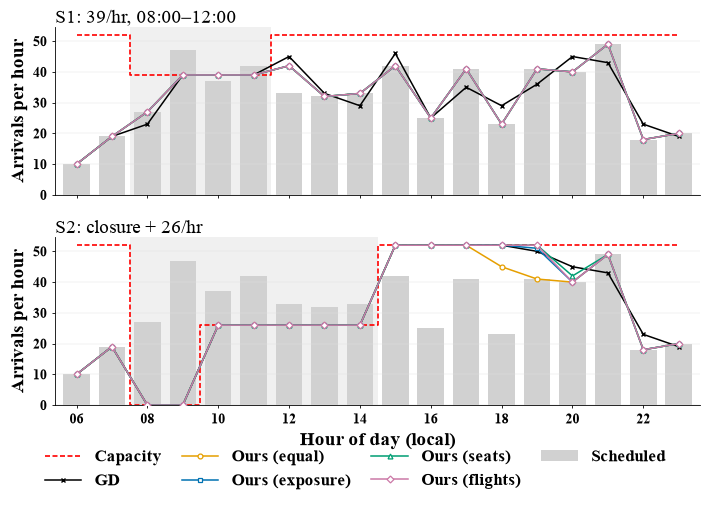}
    \caption{Scheduled arrivals per hour at SFO (bars) against capacity (dashed red). Shading marks the capacity reduction: in S1 (top), capacity drops to 39 arrivals/hr from 08:00 to 12:00; in S2 (bottom), zero capacity from 08:00 to 10:00 and reduced capacity to 26 arrivals/hr until 15:00. Lines show realized arrivals per hour under the ground delay baseline (GD) and our mechanism with each budget initialization.}
    \label{fig:allocations-per-hour}
\end{figure}

\paragraph{Slot valuation and credits.} Airline $i$'s value for its $k$th slot in block $b$ (Sec.~\ref{sec:value}) is the cost it avoids by not delaying to $b+1$, $v_i^b(k) = c_f(d^{b+1}) - c_f(d^b)$. $f$ is the airline's flight with the $k$th highest avoided cost, and $d^b$ is its delay when landing at the midpoint of block $b$. Initial credits are assigned to each airline $i$ in proportion to weight $\omega_i$, $W = |I|\bar w$ as $w_i^1 = W\omega_i/\sum_{j}\omega_j$. We compare four weights $\omega_i$:
\emph{Equal} ($\omega_i = 1$);
\emph{Exposure}, the cost of delaying all of $i$'s flights by one block, $\omega_i = U_i = \sum_{f\in F_i} c_f(B)$;
\emph{Seats}, $i$'s total number of seats (or passengers it carries), $\omega_i = \sum_{f\in F_i} S_f$; and
\emph{Flights}, $i$'s total number of flights, $\omega_i = |F_i|$.



\paragraph{\textbf{Results.}} Table~\ref{tab:budget-rules} compares our mechanism under each credit initialization with the ground delay baseline (GD), which assigns slots by ration-by-schedule followed by compression \cite{vossen_ball_2006}. In S1, no flights are canceled, and our mechanism improves on GD in every metric under every initialization. \textit{Exposure} yields the lowest total cost (64\% below GD), while \textit{seats} yields the least delay beyond 15 min and fewest propagated delays. In S2, our mechanism lowers total cost by 7--18\% under every initialization but cancels at least as many flights as GD. \textit{Exposure, seats, and flights} cancel about as many flights as GD (31--34 vs.\ 30) while lowering delay cost by 7--10\%; \textit{exposure} and \textit{seats} also cancel fewer passengers (3.3k and 2.7k vs.\ 4.0k), i.e., smaller flights. \textit{Equal} attains the lowest total cost by canceling the most flights (51), which reduces propagated delays and delays beyond 15 min by $\sim$18\%. Whether this trade is desirable depends on how airlines value cancellations. We value a cancellation only by the refund owed to its passengers; in practice, airlines also weigh rebooking, crew and aircraft repositioning, and reputation, which would raise this cost. 

These magnitudes depend on parameters we estimate from public data: operating cost per minute by aircraft type, turnaround slack, and cancellation costs. Our goal is not to predict exact savings but to show that a market informed by heterogeneous costs, in which airlines trade, improves on current practice. GD assigns slots in schedule order and ignores costs. In our mechanism, each airline bids using its own costs and trading allows slots to move to flights with higher delay costs.

\begin{table}[t]
\centering
\caption{Day-of auction outcomes by budget initialization.}
\label{tab:budget-rules}
\scriptsize
\setlength{\tabcolsep}{2pt}
\setlength{\aboverulesep}{0pt}
\setlength{\belowrulesep}{0pt}
\renewcommand{\arraystretch}{1.1}
\begin{threeparttable}
\begin{tabular*}{\columnwidth}{@{\extracolsep{\fill}} l rrrrr | rrrrr}
\toprule
 & \multicolumn{5}{c|}{S1} & \multicolumn{5}{c}{S2} \\
\cmidrule(lr){2-6} \cmidrule(lr){7-11}
Metric & GD & Eq. & Exp. & Seat & Flt & GD & Eq. & Exp. & Seat & Flt \\
\midrule
Delay cost ($\$$k) & 72.2 & 42.5 & 25.9 & 29.8 & 27.1 & 4,855 & 2,987 & 4,386 & 4,532 & 4,401 \\
Cancel. cost ($\$$k) & 0 & 0 & 0 & 0 & 0 & 1,222 & 2,012 & 1,018 & 820 & 1,252 \\
Total cost ($\$$k) & 72.2 & 42.5 & 25.9 & 29.8 & 27.1 & 6,077 & 4,999 & 5,404 & 5,353 & 5,653 \\
Canceled & 0 & 0 & 0 & 0 & 0 & 30 & 51 & 34 & 31 & 33 \\
Delay $>$15 (min) \tnote{a} & 78 & 33 & 28 & 23 & 43 & 352 & 289 & 344 & 347 & 346 \\
Pax delayed (k)\tnote{b} & 11.0 & 4.8 & 3.2 & 2.7 & 6.1 & 47.9 & 38.5 & 48.0 & 48.6 & 47.4 \\
Pax canceled (k)\tnote{c} & 0 & 0 & 0 & 0 & 0 & 4.0 & 6.6 & 3.3 & 2.7 & 4.1 \\
Propagated\tnote{d} & 66 & 21 & 19 & 17 & 28 & 352 & 286 & 343 & 347 & 346 \\
\bottomrule
\end{tabular*}
\begin{tablenotes}[flushleft]
\scriptsize
\item[] GD: ground delay baseline (RBS + compression). Ours with budget initialization methods: equal, exposure, seats, and flights (Eq., Exp., Seat, Flt). Cost (USD) and number of passengers are in thousands.
\item[a] Number of delays beyond the 15-min grace period.
\item[b] Passengers on landed flights delayed $>$15 min; canceled passengers excluded.
\item[c] Passengers on canceled flights.
\item[d] Flights whose arrival delay exceeds their turnaround buffer ($d_f > b_f$), so the delay cascades to the aircraft's next flight.
\end{tablenotes}
\end{threeparttable}
\end{table}



Figure \ref{fig:wnn} depicts the budget share (ratios in \eqref{eq:wnn}) of the airlines in S2. The start of a dark line segment marks a block when the airline did not trade ($T_i^b=0$). These dark segments are always non-decreasing, which shows that non-participation does not \textit{decrease} their budget share, validating WNN. We also report that the smallest value gained from participating in the market minus a rebate value for each partial unit at the clearing price each block is 0, which is non-negative thus validating NIR.

\begin{figure}
    \centering
    \includegraphics[width=\linewidth]{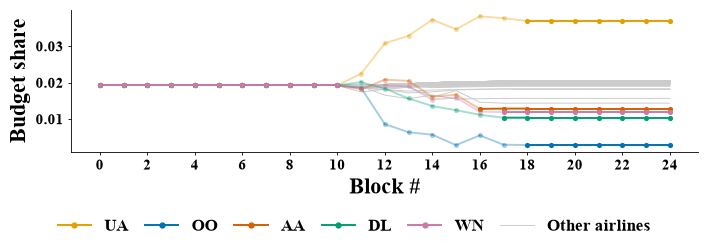}
    \caption{Relative credit balance $w_i^b/\sum_j w_j^b$ by block (S2, equal budgets), for the five airlines with the largest credit flows; gray lines show the remaining airlines. Dark segments mark blocks in which the airline does not trade ($T_i^b = 0$), faded segments mark blocks in which it trades.}
    \label{fig:wnn}
\end{figure}

Lastly, we show that our allocations satisfy the conditions of a feasible allocation--price pair (Def.~\ref{def:feas_alloc}). In Fig.~\ref{fig:feasibility}(a), clearing prices are nonnegative, rise during the capacity reduction as pending flights accumulate, and peak before the reduction ends. In Fig.~\ref{fig:feasibility}(b), each airline's allocation remains within its pending flights: the ratio never exceeds one, so the domain constraint holds. During the reduction, airlines land 3--45\% of their pending flights per block, and the fraction varies across airlines according to their bids rather than schedule order. Once capacity is restored, the ratio approaches one for most airlines by blocks 16--17, as the backlog clears.
Allocations are always at or below the capacity line in Fig. \ref{fig:allocations-per-hour}, showing the capacity constraints have been respected. Finally, in Fig. \ref{fig:wnn}, we see strictly positive budget shares, which shows an airline's payment never exceeds their budget.

\begin{figure}
    \centering
    \includegraphics[width=\linewidth]{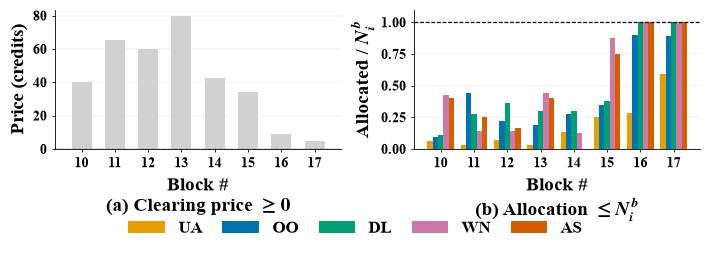}
    \caption{Feasible Allocation-Price Pair (S2, equal budgets, contested blocks only). (a) Clearing price per block, in credits. (b) Slots allocated to each airline as a fraction of its pending flights $N_i^b$, for the five airlines with the most pending flights across contested blocks. Dashed line marks the bound $N_i^b$.}
    \label{fig:feasibility}
\end{figure}



\section{Conclusion}
We study credit-based markets for sequential resource
reallocation under capacity uncertainty and existing
endowments. We establish an impossibility result for
jointly achieving NIR, WNN, NPE, and NIC, and propose
a group-splitting ascending-auction mechanism that preserves NIR and WNN while providing
conditional efficiency and approximate incentive
guarantees. Simulations illustrate its application to
airport slot allocation. Future work will develop
principled subgroup-selection and budget allocation rules and derive sharper efficiency and incentive guarantees. 

\printbibliography

\appendix
\subsection{Proof of \cref{prop:optimal_demand}}

\begin{proof}
 Since $J_i^{b+1}$ is concave, the credit opportunity cost  (the bracketed difference in \eqref{eq:effective_marginal}) is nondecreasing in $k$. Along with $v_i^b(k)$ being nonincreasing, $\Delta_i^b(k,p,w,e)$ is nonincreasing in $k$, so \eqref{eq:effective_positive} characterizes an optimal demand.
\end{proof}

\subsection{Proof of \cref{prop:nonincreasing_in_p}}
\begin{proof}
\emph{Necessity.}
Suppose that every $F_{w,a}$ is nondecreasing on
its feasible price interval.

We first establish that $J$ is nondecreasing on $[0, \infty)$ and continuous on $(0,\infty)$.
Taking $w=0$ and $a=-1$ gives $F_{0,-1}(p)=J(p)-J(0)$, so $J$ is nondecreasing on $[0,\infty)$.

Fix $c>0$. Since a monotone function has at most
countably many discontinuities, we can choose a
continuity point $d\in(0,c)$.
Set
\[
w=2c-d,\qquad a=1,\qquad p_0=c-d.
\]
Then $p_0$ is an interior point of the feasible
interval $[0,w/2]$, and
\[
F_{w,1}(p)=J(w-p)-J(w-2p).
\]
Writing $J(c-)$ and $J(c+)$ for the one-sided
limits of $J$ at $c$, continuity at $d$ gives
\[
\lim_{p\uparrow p_0}F_{w,1}(p)
=
J(c+)-J(d),
\]
and
\[
\lim_{p\downarrow p_0}F_{w,1}(p)
=
J(c-)-J(d).
\]
Monotonicity of $F_{w,1}$ therefore implies
$J(c+)\leq J(c-)$.
Together with $J(c-)\leq J(c)\leq J(c+)$, this proves continuity at $c$. Thus $J$ is continuous on $(0,\infty)$.

We next establish concavity.
Fix $0<u<v$ and $h>0$.
For sufficiently large positive integers $n$, set
\[
a_n=n,\qquad
w_n=u+(n+1)h,\qquad
p_n=h-\frac{v-u}{n+1}.
\]
Both $p_n$ and $h$ are feasible, with
$0\leq p_n\leq h$. Since
\[
w_n-(n+1)p_n=v,
\qquad
w_n-np_n=v+p_n,
\]
monotonicity yields $F_{w_n, n}(p_n) \leq F_{w_n, n}(h)$, so
\[
J(v+p_n)-J(v)
\leq
J(u+h)-J(u).
\]
Letting $n\to\infty$ and using the interior
continuity established above gives
\[
J(v+h)-J(v)
\leq
J(u+h)-J(u).
\]
Taking $v=u+h$ proves midpoint concavity.
Together with interior continuity, this establishes
concavity on $(0,\infty)$.

Because $J$ is nondecreasing and finite-valued,
the limit $L:=\lim_{z\downarrow0}J(z)$ exists and is finite, with $J(0)\leq L$.
For $y>0$ and $t\in(0,1)$, interior concavity gives
\[
J((1-t)\varepsilon+ty)
\geq
(1-t)J(\varepsilon)+tJ(y).
\]
Letting $\varepsilon\downarrow0$ yields
\[
J(ty)
\geq
(1-t)L+tJ(y)
\geq
(1-t)J(0)+tJ(y).
\]
Hence $J$ is concave on $[0,\infty)$.

The right derivative $J'_+(z)$ therefore exists
and is finite for every $z>0$.
Fix $0<x<y$, and set
\[
h=y-x,\qquad r=\frac{x}{h}.
\]
Taking $w=0$ and $a=-(r+1)$ gives the
nondecreasing function
\[
F_{0,-(r+1)}(p)
=
J((r+1)p)-J(rp).
\]
Its right derivative at $p=h$ is nonnegative:
\[
0
\leq
(r+1)J'_+(y)-rJ'_+(x)
=
\frac{yJ'_+(y)-xJ'_+(x)}{y-x}.
\]
Thus $zJ'_+(z)$ is nondecreasing.

\noindent
\emph{Sufficiency.}
Suppose that $J$ is concave on $[0,\infty)$
and $zJ'_+(z)$ is nondecreasing on $(0, \infty)$.

These conditions imply that $J$ is differentiable
on $(0,\infty)$.
Indeed, for every $c>0$, concavity gives
$J'_-(c)\geq J'_+(c)$, whereas monotonicity gives
\[
cJ'_-(c)
=
\lim_{z\uparrow c}zJ'_+(z)
\leq
cJ'_+(c).
\]
Thus the one-sided derivatives coincide.

Moreover, $J'(x)\geq0$ for every $x>0$.
Otherwise, choosing $y>x$ and using concavity
would give $yJ'(y)\leq yJ'(x)<xJ'(x)$, contradicting monotonicity of $zJ'(z)$.
Hence $J$ is nondecreasing on $(0,\infty)$.
Concavity at zero implies $J(0)\leq L:=\lim_{z\downarrow0}J(z)$, and this limit is finite.
Consequently, $J$ is nondecreasing on
$[0,\infty)$ as well.

Fix $w\geq0$ and $a\in\mathbb R$.
At any feasible price $p>0$ for which
\[
x=w-(a+1)p>0,
\qquad
y=w-ap=x+p,
\]
differentiation gives $F'_{w,a}(p)
=
J'(x)+a\bigl(J'(x)-J'(y)\bigr)$. Since $w=y+a(y-x)\geq0$, we have
$a\geq-y/(y-x)$.
Concavity and monotonicity of $zJ'(z)$ yield
\[
\begin{aligned}
F'_{w,a}(p)
&\geq
J'(x)-\frac{y}{y-x}
       \bigl(J'(x)-J'(y)\bigr)\\
&=
\frac{yJ'(y)-xJ'(x)}{y-x}
\geq0.
\end{aligned}
\]

The derivative argument establishes monotonicity wherever
$p>0$ and $w-(a+1)p>0$. We now check the remaining
feasible prices in four cases.
\begin{enumerate}
    \item \textbf{$w>0$.}
    Continuity of $J$ on $(0,\infty)$ gives $\lim_{p\downarrow0}F_{w,a}(p)=0=F_{w,a}(0)$, extending monotonicity to $p=0$.
    If $a+1>0$, the feasible interval also has the
    finite upper endpoint $p_{\max}=w/(a+1)$, where
    \[
    \begin{aligned}
    F_{w,a}(p_{\max})
    &=J(p_{\max})-J(0)\\
    &\geq J(p_{\max})-L\\
    &=\lim_{p\uparrow p_{\max}}F_{w,a}(p).
    \end{aligned}
    \]
    Thus monotonicity is preserved at this endpoint.
    If $a+1\leq0$, there is no finite upper endpoint.

    \item \textbf{$w=0$ and $a>-1$.}
    The feasible interval is the singleton $\{0\}$,
    so monotonicity holds trivially.

    \item \textbf{$w=0$ and $a=-1$.}
    We have $F_{0,-1}(p)=J(p)-J(0)$, which is nondecreasing because $J$ is nondecreasing.

    \item \textbf{$w=0$ and $a<-1$.}
    The derivative argument applies for every $p>0$.
    Moreover, $\lim_{p\downarrow0}F_{0,a}(p)
    =L-L=0=F_{0,a}(0)$, extending monotonicity to $p=0$.
\end{enumerate}
Thus $F_{w,a}$ is nondecreasing on its entire feasible
interval.
\end{proof}

\subsection{Proof of \cref{prop:optimal_demand_price_monotonicity}}
\begin{proof}
Let $D_i^b(p)$ be the largest optimal demand defined in Proposition \ref{prop:optimal_demand}. 
 Fix $p_1\le p_2$.
If $D_i^b(p_2)=0$, the result is trivial. Otherwise,
$\Delta_i^b(D_i^b(p_2),p_2,w,e)\ge0$ from \eqref{eq:effective_positive}.
Since $J_i^{b+1} \in \mathcal{J}$, Proposition \ref{prop:nonincreasing_in_p} implies that $\Delta_i^b(k, p, w, e)$ is nonincreasing in $p$, leading to
$\Delta_i^b(D_i^b(p_2),p_1,w,e)
\ge
\Delta_i^b(D_i^b(p_2),p_2,w,e)
\ge0$.

Moreover, we verify budget feasibility. Since $D_i^b(p_2)$ is feasible at $p_2$, we have $w - (D_i^b(p_2) - e)p_2 \ge 0$. If $D_i^b(p_2) \ge e$, then $p_1 \le p_2$ implies $w - (D_i^b(p_2) - e)p_1 \ge w - (D_i^b(p_2) - e)p_2 \ge 0$. If $D_i^b(p_2) < e$, the agent is a net seller, which ensures feasibility for any $p_1 \ge 0$. This ensures that $D_i^b(p_2)$ is budget-feasible at price $p_1$.

Thus, $D_i^b(p_2)$ fulfills \eqref{eq:effective_positive} under budget feasibility at $p_1$. 
By definition in Proposition \ref{prop:optimal_demand},  $D_i^b(p_1)$ is the largest demand at $p_1$ satisfying \eqref{eq:effective_positive} under budget feasibility; it follows that $D_i^b(p_1) \ge D_i^b(p_2)$.
\end{proof}

\subsection{Proof of \cref{thm:mechanism_properties}}
\begin{proof}
The rule dictates agent demand $D_i^b(p)$ is nonincreasing in $p$, and it satisfies domain and budget constraints. We check the following four conditions.
\paragraph{Price Nonnegativity} By construction, $p_b^*\ge0$.

\paragraph{Domain/Budget Constraints} \cref{alg:main_auction} ensures that $x_i^{b,*} \le D_i^b(p_b^*)$, so $x_i^{b,*} \in \{0, ..., N_i^b\}$ and $
(x_i^{b,*} - \bar e_i^b) p_b^* \le (D_i^b(p_b^*) - \bar e_i^b) p_b^* \le w_i^b$, which holds for all $p_b^* \ge 0$, establishing the budget constraint.

\paragraph{Capacity Constraints}
For each subgroup $g$, if $D_g^b(p_b^*)\le\bar C_{b,g}$, the algorithm assigns
$x_i^{b,*}=D_i^b(p_b^*)$, while if $D_g^b(p_b^*)>\bar C_{b,g}$, it
allocates exactly $\bar C_{b,g}$ items through the guaranteed
allocation and random ordering rule. Since
$\sum_{i\in I_g}x_i^{b,*}\le\bar C_{b,g}$ for all $g\in\{1,\dots,G\}$,
it holds that
$\sum_{i\in I}x_i^{b,*}\le
\sum_{g=1}^G\bar C_{b,g}=\bar C_b.$

\paragraph{Market Clearance} Suppose that $p_b^*>0$. It suffices to prove that $\sum_{i\in I}x_i^{b,*}=\bar C_b$.
 Since
$p_b^*=\min_g p_{b,g}\le p_{b,g}$ and agent demand $D_i^b(p)$ is nonincreasing in $p$,
 the \textsc{SubgroupPriceDiscovery} rule yields $D_g^b(p_b^*)\ge D_g^b(p_{b,g})\ge\bar C_{b,g}$ when $p_b^*>0$. 
Thus, for every subgroup, either $D_g^b(p_b^*)=\bar C_{b,g}$ and its
demand is allocated in full, or $D_g^b(p_b^*)>\bar C_{b,g}$ and the
random ordering rule allocates exactly $\bar C_{b,g}$ items. Thus, 
$\sum_{i\in I_g}x_i^{b,*}=\bar C_{b,g}
$ for all $g\in\{1,\dots, G\}$,
which leads to
$\sum_{i\in I}x_i^{b,*}
=
\sum_{g=1}^G\bar C_{b,g}
=
\bar C_b.$
\end{proof}

\subsection{Proof of \cref{thm:mechanism_properties2}}
\begin{proof}
We prove NIR, WNN, and NPE below.
\paragraph{Proof of NIR} Fix an agent $i$ and observe that 
$T_i^b=(x_i^{b,*}-\bar e_i^b)p_b^*$ and 
$\lfloor\bar e_i^b\rfloor-\bar e_i^b=-r_i^b$.
It follows that
\begin{align}\label{eq:diff}
    \sum_{k=1}^{x_i^{b,*}}\Delta_i^b(k,p_b^*,w_i^b,\bar e_i^b)-\sum_{k=1}^{\lfloor\bar e_i^b\rfloor}\Delta_i^b(k,p_b^*,w_i^b,\bar e_i^b)
\end{align}
 recovers the difference between the left-hand  and the right-hand sides of \eqref{eq:block_ir} (see \eqref{eq:effective_marginal} for the definition of $\Delta_i^b(k,p,w,e)$). 

By Proposition~\ref{prop:optimal_demand}, $\Delta_i^b(k,p,w,e)$ is nonincreasing in $k$. Also, since $D_i^b(p_b^*)$ is budget-feasible and optimal, 
$\Delta_i^b(k,p_b^*,w_i^b,\bar e_i^b)\ge 0$ for all $k\in\{1,\dots, D_i^b(p_b^*)\}$.
To show NIR, we need to show that the term \eqref{eq:diff} is nonnegative for the following two cases.
\begin{enumerate}
    \item $D_i^b(p_b^*)\ge \lfloor\bar e_i^b\rfloor$. By the rationing rule, $\lfloor\bar e_i^b\rfloor\le x_i^{b,*}\le D_i^b(p_b^*)$. Hence, 
we have 
$\Delta_i^b(k,p_b^*,w_i^b,\bar e_i^b)\ge0$ for all $k\in\{\lfloor\bar e_i^b\rfloor+1,\dots,x_i^{b,*}\}$. Thus, \eqref{eq:diff} is nonnegative.

    \item $D_i^b(p_b^*)<\lfloor\bar e_i^b\rfloor$. In this case, $x_i^{b,*}=D_i^b(p_b^*)$. Since $D_i^b(p_b^*)$ is an optimal demand and $\lfloor\bar e_i^b\rfloor$ is budget-feasible, we get $\Delta_i^b(k,p_b^*,w_i^b,\bar e_i^b)<0$ for all
$k\in\{D_i^b(p_b^*)+1,\dots,\lfloor\bar e_i^b\rfloor\}$ by Proposition \ref{prop:optimal_demand}.
This case also yields \eqref{eq:diff} to be nonnegative.
\end{enumerate}

\paragraph{Proof of WNN} By \cref{lem:equivalence}, WNN and WBB are equivalent. Thus, it suffices to prove that $\sum_{i\in I}(x_i^{b,*}-\bar e_i^b)p_b^* \geq0$ (see \eqref{eq:wbb}).
If $p_b^*=0$, the inequality is trivial. Suppose that $p_b^*>0$. Then, Theorem \ref{thm:mechanism_properties} implies that $ \sum_{i\in I}x_i^{b,*}=\bar C_b$. Since $\sum_{i\in I} \bar e_i^b\le \bar C_b$, we have $\sum_{i\in I}(x_i^{b,*}-\bar e_i^b)\ge 0$. 

\paragraph{Proof of NPE when $G=1$}
By \cref{thm:mechanism_properties}, the resulting allocation-price pair
is feasible and satisfies $p_b^*\left(\bar C_b-\sum_{i\in I}x_i^{b,*}\right)=0$. Suppose, toward a contradiction, that a feasible alternative
$(\{\tilde x_i^b\}_{i\in I},\{\tilde T_i^b\}_{i\in I})$
Pareto dominates the outcome in the sense of \cref{def: NPE}.

Optimality of $x_i^{b,*}$ at price $p_b^*$ and strict
monotonicity of $J_i^{b+1}$ imply that $\tilde T_i^b
\leq p_b^*(\tilde x_i^b-\bar e_i^b)$ for every $i \in I$. If not, purchasing
$\tilde x_i^b$ at price $p_b^*$ would be budget-feasible
and leave strictly more credits than the alternative.
It would therefore yield strictly higher utility than
the alternative, contradicting optimality of $x_i^{b,*}$.
Moreover, the displayed inequality must be strict
whenever agent $i$ strictly prefers the alternative.

Summing these inequalities and using feasibility and
market clearing yields
\[
\begin{aligned}
\sum_{i\in I}\tilde T_i^b
&\leq
p_b^*\left(\sum_{i\in I}\tilde x_i^b
          -\sum_{i\in I}\bar e_i^b\right)
\leq
p_b^*\left(\bar C_b-\sum_{i\in I}\bar e_i^b\right)
=
\sum_{i\in I}T_i^b.
\end{aligned}
\]
Thus, the auctioneer cannot be strictly better off.
If any agent is strictly better off, the first
inequality is strict, making the auctioneer strictly
worse off. Both cases contradict Pareto
dominance.
\end{proof}

\subsection{Proof of \cref{thm:approximate-NIC}}
\begin{proof}
For each $G$, the group splitting procedure in \cref{sec:group_splitting} ensures that the subgroup partition and capacities are independent of agents' reports.
Fix an agent $i$, its true type $\mathbf v_i^b$,
and a deviation $\hat{\mathbf v}_i^b$.
Compare truthful reporting and the deviation using
the same realization of the other agents' types
and the same random ordering.
Let $E_i$ denote the event that the deviation changes
the common price.

On $E_i^c$, the common price is unchanged, say $p$.
The other agents' demands at $p$ are therefore unchanged.
For fixed opponents' demands and ordering, the
allocation rule can be written as $x_i(d)=\min\{d,K_i\}$, where $d$ is agent $i$'s reported demand at $p$
and $K_i$ does not depend on its report.
To see this, let $r_j=\min\{D_j^b(p),\lfloor\bar e_j^b\rfloor\}$ for $j \neq i$. For $i\in I_g$, the cap is
\[
K_i
=
\max\left\{
\lfloor\bar e_i^b\rfloor,\,
\bar C_{b,g}
-\sum_{j\in I_g\setminus\{i\}}r_j
-\sum_{\substack{j\in I_g\\j\prec i}}
       \bigl(D_j^b(p)-r_j\bigr)
\right\},
\]
where $j\prec i$ means that $j$ precedes $i$
in the random ordering.

By \cref{prop:optimal_demand}, agent $i$'s true
utility at price $p$ has nonincreasing marginal gains.
Consequently,
$\min\{D_i^b(p),K_i\}$ maximizes its utility
over the budget-feasible quantities not exceeding
$K_i$. Truthful reporting therefore yields weakly
higher utility than any deviation on $E_i^c$.

To bound the gain on $E_i$, observe that budget
feasibility and weak budget balance imply, under
either report,
\[
0\leq w_i^b-T_i^b
\leq \sum_{j\in I}(w_j^b-T_j^b)
= W_b-\sum_{j\in I}T_j^b
\leq W_b.
\]
Since valuations are nonnegative and
$J_i^{b+1}$ is nondecreasing, every resulting utility
lies in $\left[
J_i^{b+1}(0),\,
\overline V_b+J_i^{b+1}(W_b)
\right]$. Thus, the gain from deviation is at most $M_b$
on $E_i$ and is nonpositive on $E_i^c$, so
\[
\begin{aligned}
&\mathbb E\!\left[
U_i^b((\hat{\mathbf v}_i^b,\mathbf v_{-i}^b);
       \mathbf v_i^b)
-
U_i^b((\mathbf v_i^b,\mathbf v_{-i}^b);
       \mathbf v_i^b)
\,\middle|\,\mathbf v_i^b
\right]\\
&\qquad\leq
M_b\,\mathbb P(E_i\mid\mathbf v_i^b)
\leq M_bq_G^b.
\end{aligned}
\]
Taking the supremum over agents, true types, and
admissible deviations proves the claim.
\end{proof}

\end{document}